\newif\ificalp
\renewcommand{\paragraph}[1]{{\vspace{2pt}\noindent\textbf{{#1}}}}
\newtheorem*{rep@theorem}{\rep@title}
\newcommand{\newreptheorem}[2]{%
\newenvironment{rep#1}[1]{%
 \def\rep@title{#2 \ref{##1}}%
 \begin{rep@theorem}}%
 {\end{rep@theorem}}}
	\newtheorem{defi}{Definition}
\newtheorem{example}{Example}
\newcommand{\PrDzmath}[1]{\Pr(\Dec_{dk'}(T')=0^{\ell'}|T'^{#1}=t^{#1})}
\newcommand{\PrDz}[1]{$\PrDzmath{#1}$}
\newcommand{\PrDxmath}[1]{\Pr(\Dec_{dk'}(T')=x|T'^{#1}=t^{#1})}
\newcommand{\PrDx}[1]{$\PrDxmath{#1}$}
\newcommand{\N}{\mathbb{N}}
\newcommand{\E}{\mathbb{E}}
\newcommand{\Gen}{\mathsf{Gen}}
\newcommand{\Enc}{\mathsf{Enc}}
\newcommand{\Dec}{\mathsf{Dec}}
\newcommand{\Ext}{\mathsf{KeyEx}}
\newcommand{\negl}{\mathsf{negl}}
\newcommand{\poly}{\mathsf{poly}}
\newcommand{\from}{\leftarrow}
\newcommand{\zo}{\{0,1\}}
\newcommand{\zon}{\{0,1\}^n}
\begin{document}
\ificalp
\else
\conferenceinfo{Submitted to PETS}{2015}
\fi
\title{How to Bootstrap Anonymous Communication}
\date{}
\ificalp
\author{Sune K. Jakobsen\thanks{School of Mathematical Sciences and School of Electronic Engineering \&
Computer Science, Queen Mary University of London, Mile End Road, London,
E1 4NS, UK. Email: S.K.Jakobsen@qmul.ac.uk.}
\and Claudio Orlandi\thanks{Department of Computer Science, Aarhus University, IT-Parken, Aabogade 34 8200 Aarhus N, Denmark. Email: orlandi@cs.au.dk. Supported by the Danish National Research Foundation
    and The National Science Foundation of China (grant 61361136003)
    for the Sino-Danish Center for the Theory of Interactive
    Computation and from the Center for Research in Foundations of Electronic Markets (CFEM). }}
\institute{}\vspace{-1cm}
\else
\numberofauthors{2}
\author{
%
%
\alignauthor
Sune K. Jakobsen
       \affaddr{School of Mathematical Sciences and School of Electronic Engineering \&
Computer Science, Queen Mary, University of London}\\
       \affaddr{Mile End Road, London}\\
       \affaddr{E1 4NS, UK}\\
       \email{s.k.jakobsen@qmul.ac.uk}
\alignauthor
Claudio Orlandi
%
}

\fi
  \maketitle 
\begin{abstract}
We ask whether it is possible to anonymously communicate a large amount of data using only public (non-anonymous) communication together with a small anonymous channel. We think this is a central question in the theory of anonymous communication and to the best of our knowledge this is the first formal study in this direction. 

To solve this problem, we introduce the concept of \emph{anonymous steganography}: think of a leaker Lea who wants to leak a large document to Joe the journalist. Using anonymous steganography Lea can embed this document in innocent looking communication on some popular website (such as cat videos on \emph{YouTube} or funny memes on \emph{9GAG}). Then Lea provides Joe with a short key $k$ which, \emph{when applied to the entire website}, recovers the document while hiding the identity of Lea among the large number of users of the website. Our contributions include:
\begin{itemize}
\item Introducing and formally defining \emph{anonymous steganography},
\item A construction showing that anonymous steganography is possible (which uses recent results in circuits obfuscation), 
\item A lower bound on the number of bits which are needed to bootstrap anonymous communication.
\end{itemize}

\end{abstract}


\section{Introduction}

Lea the leaker wants to leak a big document to Joe the 
journalist in an anonymous way\footnote{This naming convention is courtesy of Nadia Heninger.}. Lea has a way of anonymously
communicating a small number of bits to Joe, but the size of 
the document she wants to leak is orders of magnitudes 
greater than the capacity of the anonymous channel between 
them.

In this paper we ask whether it is possible to ``bootstrap'' 
anonymous communication, in the sense that we want to 
construct a ``large'' anonymous channel using only public 
(non-anonymous) communication channels together with a 
``small'' anonymous channel. We find the question to be 
central to the theory of anonymous communication and to the 
best of our knowledge this is the first formal study in this 
direction.

To solve this problem, we introduce a novel cryptographic 
primitive, which we call \emph{anonymous steganography}: the 
goal of (traditional) steganography is to hide that a 
certain communication is taking place, by embedding 
sensitive content in innocent looking traffic (such as 
pictures, videos, or other redundant documents). There is no 
doubt that steganography is a useful tool for Lea the 
leaker: using steganography\footnote{For a background on 
steganographic techniques see e.g.,
~\cite{Fridrich:2009:SDM:1721894}.}
she could send sensitive documents to Joe the journalist in 
such a way that even someone monitoring all internet traffic 
would not be able to notice that this communication is 
taking place.\footnote{Of course this powerful eavesdropper 
could try to apply the decoding procedure of the 
steganographic algorithm to the monitored traffic, but 
combining steganography with cryptography (assume e.g., that 
Lea knows Joe's public key) it is quite easy to make sure 
that the message to be steganographically embedded is 
indistinguishable from random.}

However, steganography alone cannot help Lea if she wants to 
make sure that Joe does not learn her identity, and there is 
a strong demand for solutions which guarantee the anonymity 
of whistleblowers (see e.g., SecureDrop
\footnote{\url{https://freedom.press/securedrop}}).

From a high level point of view, anonymous steganography 
allows Lea to embed some sensitive message into an innocent 
looking document, in such a way that someone looking 
\emph{at the entire website}\footnote{Intuitively, it is crucial for Lea's anonymity that Joe can only decode the entire website \emph{at once}: if Joe had a way of decoding single documents (or portions) he would easily be able to pinpoint which document (and therefore which user) contains the sensitive message.} (or a large portion of it) can 
recover the original message without being able to identify 
which of the documents contains the message.
Unfortunately this is too good to be true, and in Section~
\ref{sec:lower} we prove that it is impossible to construct 
an anonymous steganography scheme unless Lea sends 
a key (of super-logarithmic size) to Joe. The idea is: if 
the scheme is correct at some point the probability that Joe 
outputs $x$ has to increase from polynomially small to $1$. 
Joe can estimate how each message (sent by any of the 
users over the non-anonymous channel) affects this 
probability and concludes that the message which changes 
this probability the most must come from Lea.
Hence, the messages that causes this increase has to be sent 
over an anonymous channel. 

To summarize, in anonymous steganography Lea wants to 
communicate a sensitive (large) message $x$ to Joe. To do 
so, she embeds $x$ in some innocent looking (random) 
document $c$ which she uploads to a popular website (not 
necessarily in an anonymous way). Then Lea produces some 
(short) decoding key $dk$ (which is a function of $c$ and 
all other documents on the website -- or at least a set 
large enough so that her identity is hidden in a large group 
of users, such as ``all videos uploaded last week'') which 
she then communicates to Joe using an anonymous channel. Now Joe is able to recover the original 
message $x$ from the website using the key $dk$, but at the 
same time Joe has no way of telling which 
document contains the message (and therefore which of the 
website user is the leaker). In Section~\ref{sec:definition} 
we formally introduce anonymous steganography and in 
Section~\ref{sec:prot} we 
show how to construct such a scheme.

\paragraph{Related Work.} Practical ways for a leaker to 
communicate anonymously with a journalist is by using e.g., 
the aforementioned SecureDrop, which uses Tor~\cite{Tor}. However, Tor is not secure against end-to-end attacks~\cite{Tor}. Another disadvantage in Tor is that it relies on a network of servers whose only purposes is to make anonymous communication possible. This means that countries can, with some success, block Tor servers~\cite{WL12} and they could make it illegal to host such servers. 


\emph{Message In A Bottle}~\cite{miab13} is a protocol where Lea can encrypt her message under Joe's public key, embed it in an image using steganography and post the image on any blog. Joe will now monitor all blogs to see if someone left a (concealed) message for him. Interestingly~\cite{miab13} shows that this approach is feasible in practice and because Lea can use any blog, it will be costly for e.g. a government to prevent Lea from sending the message to Joe. However, in this protocol Joe learns Lea's identity, which is what we are trying to prevent in our work. 

In \emph{cryptogenography}~\cite{Brody14,ITC14} a group of users cooperate to allow a leaker to publish a message with some reasonable degree of anonymity: here we want that anyone should be able to recover the message from the protocol transcript, but no one (even a computationally unbounded observers) should be able to determine with certainty the identity of the leaker. In other words in cryptogenography we are happy as long as the observer cannot produce evidence which proves with certainty the identity of the leaker (which could be used e.g., in a court case). In~\cite{Brody14} the leaker can publish one bit correctly but no observer can guess the identity of the leaker with probability more than $44\%$. In~\cite{ITC14} instead a different setting is considered, where multiple leakers agree to publish some information while hiding their identity by blending into an arbitrarily large group. The leakers do not need perfect anonymity, but just want to ensure that for each leaker, an observer will never assign a probability greater that $c$ to the event that that person is a leaker. It is shown that for any $\epsilon>0$ and sufficiently large $n$, $n$ leakers can publish $\left(-\frac{\log(1-c)}{c}-\log(e)-\epsilon \right)n$ bits, where $e$ is the base of the natural logarithm. Our work is inspired by the model in~\cite{ITC14}. The main difference is that we assume the adversary has bounded computational power, so we only need one leaker and we get all but negligible anonymity. 

For a survey about anonymity channels, see ~\cite{DD08}. In ~\cite{IKOS06} the authors investigated how an anonymous channel could be used to implement other cryptographic primitives, but not if it could be used to bootstrap a larger anonymous channel. Finally, our positive result is inspired by the clever 
techniques of Hub\'{a}\v{c}ek and Wichs~\cite{HW15} to 
compress communication using obfuscation, and crucially relies on their techniques.

\paragraph{Open problems.} 
Unfortunately our positive result crucially relies on heavy tools such as homomorphic encryption and circuit obfuscation, making it very far from being useful in practice. We leave it as a major open question to construct such schemes using simpler and  more efficient cryptographic tools (perhaps even at the price of relaxing the definition of anonymity).

Other open problems include studying whether the computational complexity for the leaker must depend on the size of the anonymity set if the leaker is given a hash of all the documents, and whether it is possible to construct more efficient protocols if multiple leakers are leaking to Joe at once.

\section{Definitions}
\label{sec:definition}

\paragraph{Notation.}
We write $[x,y]$ with $x<y\in\mathbb{N}$ as a shorthand for $\{x,\ldots, y\}$ and $[x]$ as a shorthand for $[1,x]$. If $v$ is a vector $(v_1,\ldots, v_n)$ then $v_{-i}$ is a vector such that $(v_1,\ldots, v_{i-1}, \bot ,v_{i+1},\ldots v_n)$ and $(v_{-i},v_{i})=v$. A function is \emph{negligible} if it goes to $0$ faster than the inverse of any polynomial. We write $\poly(\cdot)$ and $\negl(\cdot)$ for a generic polynomial and negligible function respectively. $x\from S$ denotes sampling a uniform element $x$ from a set $S$. If $A$ is an algorithm $x\from A$ is the output of $A$ on a uniformly random tape. 
We highlight values $\alpha,\beta,\ldots,$ hardwired in a circuit $C$ using the notation $C[\alpha,\beta,\ldots]$.

\paragraph{Anonymous Steganography.} We define an \emph{anonymous steganography} scheme as a tuple of algorithms  $\pi=(\Gen,\Enc,\Ext,\Dec)$ where\footnote{All algorithms (even when not specified) take as input the security parameter $\lambda$, and the length parameters $\ell,\ell',d,s$.}:

\begin{itemize}
\item $ek \from \Gen(1^\lambda)$ is a randomized algorithm which generates an encoding key.
\item $c \from \Enc_{ek}(x)$ is a randomized algorithm which encodes a secret message $x\in\zo^{\ell'}$ into a (pseudorandom looking) document $c\in\zo^\ell$.\footnote{In our scheme $\ell=\ell'$.}
\item $dk \from \Ext_{ek}(t,i)$ 
takes as input a public vector of documents $t\in(\zo^\ell)^d$, an index $i\in[d]$ such that $t_i=c$,
and extracts a (short) decoding key $dk\in\zo^s$.
\item $x' = \Dec_{dk}(t)$ 
recovers a message $x'$ using the decoding key $dk$ and the public vector of documents $t$ in a deterministic way.
\end{itemize}

\paragraph{How to Use The Scheme.} To use anonymous steganography, Lea generates the encoding key $ek$ using $\Gen$, and then encodes her secret $x$ using $\Enc_{ek}$ to get the ciphertext $c$. She can then upload $c$ to some website.\footnote{For simplicity we will in this example assume that Lea is using a website where everyone is storing documents that are indistinguishable from random. If she is using e.g.~YouTube, she would need to use steganography to get an innocent looking \emph{stegotext}, and Lea and Joe should use the inverse program for extracting messages from stegotext whenever they download documents from the site.} 
She then waits some time, and chooses the set of documents she is hiding among, for example, all files uploaded to this website during that day/week. Lea then downloads all these documents $t$ and finds the index $i$ of her own document in this set. 
Finally she computes $dk\from \Ext_{ek}(t,i)$,
 and uses the small anonymous channel to send $dk$ to Joe together with a pointer to $t$.

\paragraph{Properties of Anonymous Steganography.} We require the following properties: \emph{correctness} (meaning that $x'=x$ with overwhelming probability), \emph{compactness} (meaning that $s<\ell'$) and \emph{anonymity} (meaning that a receiver does not learn any information about $i$). Another natural requirement is \emph{confidentiality} (meaning that one should not be able to learn the message without the decoding key $dk$), but it is easy to see that this follows from \emph{anonymity}. Formal definitions follow:


\begin{defi}[Correctness]\label{def:ascorr} We say an anonymous steganography scheme is \emph{$q$-correct} if for all
$\lambda \in \N, x\in \zo^{\ell'}, i\in [d], t_{-i}\in(\zo^\ell)^{d-1}$, the following holds
$$
\Pr\left[\Dec_{ dk }\left( (t_{-i}, c ) \right)= x\right]\geq q.
$$
where 
$ek \from \Gen(1^\lambda)$, $c\from \Enc_{ek}(x)$, $dk \from \Ext_{ek}( (t_{-i},c) ,i   )$ and the probabilities are taken over all the random coins. We simply say that a scheme is \emph{correct} when $q\geq1-\negl(\lambda)$.
\end{defi}

\newcommand{\Adv}{\mathcal{A}}

\newcommand{\Chal}{\mathcal{C}}
\begin{defi}[Anonymity]\label{def:asanon} Consider the following game between an adversary $\Adv$ and a challenger $\Chal$ :
\begin{enumerate}
\item The adversary $\Adv$ outputs a message $x\in \zo^{\ell'}$, two indices $i_0\neq i_1 \in [d]$, and a vector $t_{-(i_0,i_1)}$;
\item The challenger $\Chal$:
\begin{enumerate} 
\item samples a bit $b\from \zo$;
\item computes $ek\from \Gen(1^\lambda)$, $t_{i_b}\from \Enc_{ek}(x)$ and samples $t_{i_{1-b}}\from \zo^\ell$;
\item computes $dk\from \Ext_{ek}\left((t_{-(i_0,i_1)},(t_{i_0},t_{i_1})),i_b\right)$ 
\item outputs $dk,t$;
\end{enumerate}
\item $\Adv$ outputs a guess bit $g$;
\end{enumerate}
We say $\pi$ satisfies \emph{anonymity} if for all PPT $\Adv$ $\left|\Pr[g=b]-\frac{1}{2}\right|=\negl(\lambda)$. 

\end{defi}

\newcommand{\Obf}{\mathcal{O}}
\newcommand{\iO}{i\mathcal{O}}

\newcommand{\obf}[1]{\bar{#1}}

\newcommand{\Circs}{\mathcal{C}}

\newcommand{\FHGe}{\mathsf{HE.G}}
\newcommand{\FHEn}{\mathsf{HE.E}}
\newcommand{\FHDe}{\mathsf{HE.D}}
\newcommand{\FHEv}{\mathsf{HE.Eval}}
\newcommand{\SE}{\mathsf{SE}}
\newcommand{\SD}{\mathsf{SD}}
\newcommand{\SP}{\zo^lambda}
\newcommand{\SC}{\zo^{ell}}
\newcommand{\SK}{\zo^{ell'}}

\newcommand{\SSBG}{\mathsf{VC.G}}
\newcommand{\SSBH}{\mathsf{VC.C}}
\newcommand{\SSBO}{\mathsf{VC.D}}
\newcommand{\SSBV}{\mathsf{VC.V}}
\newcommand{\lblock}{{\ell_\mathsf{b}}}
\newcommand{\lhash}{{\ell_\mathsf{c}}}
\newcommand{\lopen}{{\ell_\mathsf{d}}}
\newcommand{\hroot}{\gamma}
\newcommand{\hk}{{ck}}

\paragraph{Building Blocks.} We will need the following ingredients in our construction: 1) an \emph{indistinguishability obfuscator}~\cite{DBLP:conf/focs/GargGH0SW13} $\obf{C}\from \Obf(C)$ which takes any polynomial size circuit $C$ and outputs an obfuscated version $\obf{C}$; 2) A \emph{compact} homomorphic encryption scheme $(\FHGe,\FHEn,\FHDe,\FHEv)$; 3) A pseudorandom function $f$; 4) A vector commitment scheme $(\SSBG,\SSBH,\SSBO,\SSBV)$ which allows to commit to a long string $x$ using $\SSBH$, and where it is possible to decomitt to individual bits of $x$ using $\SSBO$. Crucially, the proof of correct decomitting $\pi^j$ for any bit $j$ has size at most polylog in $|x|$. In addition, we need that the vector commitment scheme is \emph{somewhere statistically binding} according to the definition of Hub\'{a}\v{c}ek and Wichs~\cite{HW15}: in a nutshell, this means that when generating a commitment key $\hk$ it is possible to specify a special position $i$ such that a) any commitment generated using the key $\hk$ is statistically binding for the $i$-th bit of $x$ (this property is crucial to be able to verify these commitments inside circuits obfuscated using $\iO$) and that b) $\hk$ computationally hides the index $i$. Such a vector commitment scheme can be constructed from fully-homomorphic encryption~\cite{HW15}. To keep the paper self-contained, all these tools are formally defined in the rest of this section.
\paragraph{Indistinguishability obfuscation.} We use an \emph{indistinguishability obfuscator} like the one proposed in~\cite{DBLP:conf/focs/GargGH0SW13} such that $\obf{C}\from \Obf(C)$ which takes any polynomial size circuit $C$ and outputs an obfuscated version $\obf{C}$ that satisfies the following property.
\begin{defi}[Indistinguishability Obfuscation] \label{def:io} We say $\Obf$ is an \emph{indistinguishability obfuscator} for a circuit class $\Circs$ if for all $C_0,C_1\in\Circs$ such that $\forall x:C_0(x)=C_1(x)$ and $|C_0|=|C_1|$ it holds that:
\begin{enumerate}
\item $\forall C\in \Circs, \forall x\in\zon,  \Obf(C)(x)=C(x)$;
\item $|\Obf(C)|=\poly(\lambda|C|)$  
\item  for all PPT $\Adv$:
$$
\left | \Pr[\Adv(\Obf(C_0)) = 0]-\Pr[\Adv(\Obf(C_0)) = 1]\right| < \negl(\lambda)
$$
\end{enumerate}
\end{defi}

\paragraph{Homomorphic Encryption (HE).}
Let $(\FHGe,\FHEn,\FHDe)$ be an IND-CPA public-key encryption scheme with an additional algorithm $\FHEv$ 
which on input the public key $pk$, $n$ ciphertexts $c_1,\ldots,c_n$ and a circuit $C:\zon\to\zo$ outputs a ciphertext $c^*$, then we say that:
\begin{defi}[Correctness -- HE] \label{def:fhecor} An HE scheme $(\FHGe,\FHEn,\FHDe,\FHEv)$ is \emph{correct for a circuit class $\Circs$} if for all $C\in\Circs$

$$
\FHDe_{sk}(\FHEv_{pk}(C,\FHEn_{pk}(x_1),\ldots,\FHEn_{pk}(x_n)) = C(x_1,\ldots,x_n) 
$$
\end{defi}
\begin{defi}[Compactness -- HE]\label{def:fhecompact} An HE scheme $(\FHGe,\FHEn,\FHDe,\FHEv)$  is called  \emph{compact} if there exist a polynomial $s\in\poly(\lambda)$ such that the output of $\FHEv(C,c_1,\ldots,c_n)$ is at most $s$ bits long (regardless of the size of the circuit $|C|$ or the number of inputs $n$).
\end{defi}

The first candidate homomorphic encryption for all circuits 
was introduced by Gentry~\cite{DBLP:conf/stoc/Gentry09}. Later Brakerski
and Vaikuntanathan~\cite{DBLP:conf/focs/BrakerskiV11} showed that it is possible to build homomorphic encryption based only on the (reasonable) assumption that the learning with error problem (LWE) is computationally hard.

\paragraph{Pseudorandom Functions.} We need a pseudorandom function $f : \zo^{\lambda} \times \zo^{\lambda} \to \zo$. It is well known that the existence of one way function (implied by the existence of homomorphic encryption) implies the existence of PRFs.

\paragraph{Somewhere Statistically Binding (SSB) Vector Commitment Scheme.}
This primitive was introduced by Hub\'{a}\v{c}ek and Wichs~\cite{HW15} under the name \emph{somewhere statistically binding hash}, but we think that the term \emph{vector commitment scheme} is better at communicating the goal of this primitive.

In a nutshell, a Merkle tree (instantiated with a collision resistant hash function) allows to construct a vector commitment: the commitment is the root of the tree, and to decommit a single leaf one can simply send the (logarithmically many) hashes corresponding to the nodes which are necessary to compute the root from the leaf. Unfortunately this only leads to a computationally binding commitment, which leads to a problem when verifying these commitments inside a circuit obfuscated using indistinguishability obfuscation. The point is, $\iO$ 
only ensures that the obfuscation of two circuits are computationally indistinguishable if the two original circuits compute the same function. Therefore computational binding is not enough since there exist (even if they hard to find) other inputs which make the verification procedure to accept. A \emph{somewhere statistically binding commitment} has the additional property that when the commitment key is generated, an index $i$ is specified as well, and the commitment key ``hides'' this index $i$. Now a commitment to $x$ is computationally binding for all leaves $\neq i$ and statistically binding for the leaf $i$. This allows us to (via a series of hybrids) use this commitment inside a circuit obfuscated using $\iO$.

More formally a SSB vector commitment scheme is composed of the following algorithms:
\begin{description}
\item[Key Generation:] The key generation algorithm $\hk\from \SSBG(1^\lambda,L,i)$ takes as input an integer $L\leq 2^{\lambda}$ and index $i\in [L]$ and outputs a public key $\hk$.
\item[Commit:] The commit algorithm $\SSBH_{\hk} :(\zo^{\lblock})^L \to \zo^{\lhash}$ is a deterministic polynomial time algorithm which takes as input a string $x=(x_1,\ldots,x_{L})\in({\zo^\lblock})^L$ and outputs $\SSBH_{\hk}(x)\in\zo^\lhash$.
\item[Decommit:] The decommit algorithm $\pi\from \SSBO_{\hk}(x,j)$ given the commitment key $\hk$, the input $x\in({\zo^\lblock})^L$ and an index $j\in [L]$, creates a proof of correct decommitment $\pi\in\zo^{\lopen}$
\item[Verify:] The verify algorithm $\SSBV_{\hk}(y,j,u,\pi)$ given the key $\hk$ and $y\in\zo^\lhash$ an integer index $j\in[L]$, a value $u\in{\zo^\lblock}$ and a proof $\pi\in\zo^\lopen$, outputs $1$ for accept (that $y=\SSBH_\hk(x)$ and $x_j=u$) or $0$ for reject.
\end{description}

\begin{defi}[Vector Commitment Scheme -- Correctness]\label{def:vccor} A vector commitment scheme is \emph{correct} if for any $L\leq 2^{\lambda}$ and $i,j\in [L]$, any $\hk\from \SSBG(1^\lambda,L,i)$, $x\in({\zo^\lblock})^{L}$, $\pi\from \SSBO_{\hk}(x,j)$ it holds that $\SSBV_{\hk}(\SSBH_\hk(x),j,x_j,\pi)=1$. 
\end{defi}

\begin{defi}[Vector Commitment Scheme -- Index Hiding]\label{def:vcih} We consider the following game between an attacker $\Adv$ and a challenger $\Chal$:
\begin{itemize}
\item The attacker $\Adv(1^\lambda)$ chooses an integer $L$ and two indices $i_0\neq i_1\in[L]$;
\item The challenger $\Chal$ chooses a bit $b\from \zo$ and sets $\hk\from\SSBG(1^\lambda,L,i_b)$.
\item The attacker $\Adv$ gets $\hk$ and outputs a guess bit $g$.
\end{itemize}
We say a vector commitment scheme is \emph{index hiding} if for all PPT $\Adv$ $$\left|\Pr[g=b]-\frac{1}{2}\right|<\negl(\lambda)$$
\end{defi}

\begin{defi}[Vector Commitment Scheme -- Somewhere Statistically Binding] 
\label{def:vcssb} We say $\hk$ is \emph{statistically binding for index $i$} if there are no $y,u\neq u', \pi,\pi'$ such that 
$$
\SSBV_{\hk}(y,i,u,\pi)=\SSBV_{\hk}(y,i,u',\pi')=1
$$
\end{defi}

In \cite{HW15} it is shown how to construct SSB vector commitments using homomorphic encryption.

\section{A Protocol For Anonymous Steganography}\label{sec:prot}

We start with a high-level description of our protocol (in 
steps) before presenting the actual construction and proving 
that it satisfies our notion of anonymity.

\paragraph{First attempt.} Let the encoding key $ek$ be a 
key for a PRF $f$, and let the encoding procedure be simply 
a ``symmetric encryption'' of $x$ using this PRF. 

In this first attempt we let the decoding key $dk$ be the 
obfuscation of a circuit $C[i,ek,\gamma](t)$. The 
circuit contains two hard-wired secrets, the index of Lea's 
document $i\in[d]$ and the key for the PRF $ek$. It also 
contains the hash of the entire set of documents $\gamma 
=H(t)$. On input a database $t$ the circuit checks if $
\gamma=H(t)$ and if this is the case outputs $x$ by 
decrypting $t_i$ with $ek$. 

Clearly this first attempt fails miserably since the size of 
the circuit is now proportional to the size of the entire 
database $t=d\ell$, which is even larger than the size of 
the secret message $|x|=\ell$.

\newcommand{\mux}{\mathsf{mux}}

\paragraph{Second attempt.} To remove the dependency on the 
number of documents $d$, we include in the decoding key an 
encryption $\alpha=\FHEn_{pk}(i)$ of the index $i$ (using the 
homomorphic encryption scheme), and an obfuscation of a (new) 
circuit $C[ek,sk,\gamma](\beta)$, which contains hardwired 
secrets $ek$ and $sk$ (the secret key for the homomorphic 
encryption scheme), as well as a hash $
\gamma=H(\FHEv(\mux[t],\alpha))$, where the circuit $\mux[t]
(i)$ outputs $t_i$. The circuit $C$ now checks that $
\gamma=H(\beta)$ and if this is the case computes $t_i\from
\FHDe_{sk}(\beta)$ using the secret key of the HE scheme, 
then decrypts $t_i$ using $ek$ and outputs the 
secret message $x$. When Joe receives the decoding key $dk$, 
Joe constructs the circuit $\mux[t]$ (using the public $t$) 
and computes $\beta=\FHEv(\mux[t],\alpha)$. To 
learn the secret, he runs the obfuscated circuit on $\beta$.

In other words, we are now exploiting the compactness of the homomorphic encryption scheme to let Joe compute an encryption of the document $c=t_i$ from the public database $t$ and the encryption of $i$. Since Lea the leaker can predict this ciphertext\footnote{The evaluation algorithm $\FHEv$ can always be made deterministic since we do not need circuit privacy.}, she can construct a circuit which only decrypts when this particular ciphertext is provided as input. However, the size of $\beta$ (and therefore $C$) is proportional to $\poly(\lambda)+\ell$, thus we are still far from our goal.\footnote{Note that the decryption key also contains an encryption of $i$ which depends logarithmically on $d$, but we are going to ignore all logarithmic factors.}

\paragraph{Third attempt.} To remove the dependency from the length of the document $\ell$, we construct a circuit which takes as input an encryption of a single bit $j$ instead of the whole ciphertext. However, we also need to make sure that the circuit only decrypts these particular ciphertexts, and does not help Joe in decrypting anything else. Moreover, the circuit must perform this check in an efficient way (meaning, independent of the size of $\ell$), so we cannot simply ``precompute'' these $\ell$ ciphertexts and hardwire them into $C$.

This is where we use the vector commitment: we let the 
decoding key include a (short) commitment key $\hk$. We include in the 
obfuscated circuit a (short) commitment $\gamma=\SSBH_{\hk}(\beta)$ (where 
$\beta=(\beta^1,\ldots,\beta^\ell)$ is a vector of encryptions of 
bits) and we make sure that the circuit only helps Joe in decrypting these 
$\ell$ ciphertexts (and nothing else). In other words, we obfuscate the circuit 
$C[ek,sk,\hk,\gamma](\beta',\pi',j)$ which first checks if $
\SSBV_{\hk}(\gamma,j,\beta',\pi')=1$ and if this is the case it 
outputs the $j$-th bit of $x$ from the $j$-th bit of the ciphertext $t^j_i\from 
\FHDe_{sk}(\beta')$.\footnote{This means that we need to use a symmetric encryption scheme where it is possible to recover a single bit of the plaintext from a single bit of the ciphertext. This can easily be done by encrypting $x$ bit by bit using the PRF.} We have now almost achieved our goal, since the size of the decoding key is $\poly(\lambda\log(d\ell))$.

\paragraph{Final attempt.} We now have to argue that our scheme is secure. Intuitively, while it is true that the index $i$ is only sent in encrypted form, we have a problem since the obfuscated circuit contains the secret key for the homomorphic encryption scheme, and we therefore need a final fix to be able to argue that the adversary does not learn any information about $i$.

The final modification to our construction is to encrypt the index $i$ twice under two independent public keys. From these encryptions Joe computes two independent encryptions of the bit $t^j_i$ which he inputs to the obfuscated circuits together with proofs of decommitment. The circuit now outputs $\bot$ if any of the two decommitment proofs are incorrect, otherwise the circuit computes and outputs $x^j$ from one of the two encryptions (and ignores the second ciphertext). 

\paragraph{Anonymity.} Very informally, we can now prove that Joe cannot distinguish between the decoding keys computed using indices $i_0$ and $i_1$ in the following way: we start with the case where the decoding key contains two encryptions of $i_0$ (this correspond to the game in the definition with $b=0$). Then we define a hybrid game where we change one of the two ciphertext from being an encryption of $i_0$ with an encryption of $i_1$. In particular, since we change the ciphertext which is ignored by the obfuscated circuit, this does not change the output of the circuit at all (and we can argue indistinguishability since the obfuscated circuit does not contain the secret key for this ciphertext). We also replace the random document $c_{i_1}$ with an encryption of $x$ with a new key for the PRF. Finally we change the obfuscated circuit and let it recover the message $x$ from the second ciphertext. Thanks to the SSB property of the commitment scheme it is possible to prove, in a series of hybrids, that the adversary cannot notice this change. To conclude the proof we repeat the hybrids (in inverse order) to reach a game which is identical to the definition of anonymity when $b=1$.

\paragraph{The Actual Construction.} A complete specification of our anonymous steganography scheme follows. 
\begin{description}
\item[Key Generation:] On input the security parameter $\lambda$ the algorithm $\Gen$ samples a random key $ek\in\zo^{\lambda}$ for the PRF and outputs $ek$.
\item[Encoding:]  On input a message $x\in\zo^\ell$ 
and an encoding key $ek$ the algorithm $\Enc$ outputs an encoded message $c\in\zo^\ell$ where for each bit $j\in[\ell]$, $c^j=x^j\oplus f_{ek}(j)$.
\item[Key Extraction:]  On input the encoding key $ek$, 
the database of documents $t$, and index $i$ such that $t_i=c$ the algorithm $\Ext$ outputs a decoding key $dk$ generated as follows:
\begin{enumerate}
\item For all $u\in \zo$ run
 $(pk_u,sk_u)\from \FHGe(1^\lambda)$ and $\alpha_u \from \FHEn_{pk_u}(i)$.

\item For all $j\in[\ell],u\in\zo$ run $\beta^j_{u}= \FHEv_{pk_u}(\mux[t,j],\alpha_{u})$\footnote{Note that we consider $\FHEv$ to be a deterministic algorithm. This can always be achieved by fixing the random tape of $\FHEv$ to some constant value.}  where the circuit $\mux[t,j](i)$ outputs the $j$-th bit of the $i$-th document $t^j_i$;

\item For all $u\in\zo$ run
$\hk_u \from 
\SSBG(1^\lambda,\ell,0)$
 and  $\gamma_{u}\from \SSBH_{\hk_u}(\beta^1_{u},\ldots,\beta^\ell_{u})$.
\item Pick a random bit $\sigma \in \zo$.
\item Define the circuit $C[ek,\sigma,sk_\sigma,\hk_0,\hk_1,\gamma_0,\gamma_1](\beta'_0,\beta'_1,\pi'_0,\pi'_1,j)$ as follows:
\begin{enumerate}
\item if($\forall u\in \zo: \SSBV_{hk_u}(\gamma_u,j,\beta'_u,\pi'_u)$) output $\FHDe_{sk_\sigma}(\beta'_\sigma)\oplus f_{ek}(j)$;
\item else output $\bot$;
\end{enumerate}

\item Compute an obfuscation $\obf{C}\from\Obf(C_\sigma)$ where $C_\sigma$ is a shorthand for the circuit defined before, padded to length equal to $\max(C,C')$ (where the circuit $C'$ is defined in the proof of security).
\item Output $dk=(pk_0,pk_1,\alpha_0,\alpha_1,\hk_0,\hk_1,\obf{C})$
\end{enumerate}
\item[Decoding:] On input a decoding key $dk$ and a database of document $t$ the algorithm $\Dec$ outputs a message $x'$ in the following way: 
\begin{enumerate}
\item Parse $dk=(pk_0,pk_1,\alpha_0,\alpha_1,\hk_0,\hk_1,\obf{C})$;
\item For all $j\in[\ell],u\in\zo$ run $\beta^j_{u}=\FHEv_{pk_u}(\mux[t,j],\alpha_{u})$;
\item For all $u\in\zo$ run $\gamma_{u}\from \SSBH_{\hk_u}(\beta^1_{u},\ldots,\beta^\ell_{u})$.
\item For all $j\in[\ell],u\in\zo$ compute $\pi^j_u \from \SSBO_{\hk_u}((\beta^1_{u},\ldots,\beta^\ell_{u}),j)$;
\item For all $j\in [\ell]$ output $(x')^j\from \obf{C}(\beta^j_0,\beta^j_1,\pi^j_0,\pi^j_1,j)$;
\end{enumerate}
\end{description}

\begin{theorem} \label{thm:as} If 
a) $f$ is PRF 
b) $(\SSBG,\SSBH,\SSBO,\SSBV)$ is a vector commitment scheme satisfying 
Definitions~\ref{def:vccor},~\ref{def:vcih} and \ref{def:vcssb} 
c) $(\FHGe,\FHEn,\FHDe,\FHEv)$ is a homomorphic encryption scheme 
satisfying Definition~\ref{def:fhecor} and~\ref{def:fhecompact} and 
d) $\Obf$ is an obfuscator for all polynomial size circuits 
satisfying Definition~\ref{def:io}
then the anonymous steganography scheme $(\Gen,\Enc,\Ext,\Dec)$ 
satisfies Definitions~\ref{def:ascorr} and~\ref{def:asanon}.
\end{theorem}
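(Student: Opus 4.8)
### Proof Plan

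\textbf{Correctness.}
The plan is to first dispatch correctness, which is essentially a syntactic chase through the definitions. Fix $\lambda, x, i, t_{-i}$ and run the honest algorithms. In $\Dec$, for each bit $j$ Joe recomputes exactly the same $\beta^j_u = \FHEv_{pk_u}(\mux[t,j],\alpha_u)$ that $\Ext$ computed (here we use that $\FHEv$ is deterministic and that Joe reconstructs $\mux[t,j]$ from the same public $t$), hence the same $\gamma_u = \SSBH_{\hk_u}(\beta^1_u,\ldots,\beta^\ell_u)$ that are hardwired into $\obf{C}$. By correctness of the vector commitment scheme (Definition~\ref{def:vccor}), $\SSBV_{\hk_u}(\gamma_u,j,\beta^j_u,\pi^j_u)=1$ for both $u\in\zo$, so the obfuscated circuit takes the first branch. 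By functional correctness of $\iO$ (property~1 of Definition~\ref{def:io}) the obfuscated circuit agrees with $C$, so it outputs $\FHDe_{sk_\sigma}(\beta^j_\sigma)\oplus f_{ek}(j)$. By correctness of the HE scheme (Definition~\ref{def:fhecor}) $\FHDe_{sk_\sigma}(\beta^j_\sigma)=\mux[t,j](i)=t^j_i=c^j=x^j\oplus f_{ek}(j)$, so the output is $x^j\oplus f_{ek}(j)\oplus f_{ek}(j)=x^j$. Thus $\Dec$ outputs $x$; this holds with probability $1$ over the coins (the only randomness is in key generation and obfuscation, and the argument went through for every choice), so the scheme is even perfectly correct, a fortiori correct in the sense of Definition~\ref{def:ascorr}.

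\textbf{Anonymity: the hybrid structure.}
The bulk of the work is anonymity. I would prove $\Pr[g=0\mid b=0]$ and $\Pr[g=1\mid b=1]$ differ by a negligible amount via a chain of hybrids between the $b=0$ game and the $b=1$ game, following the ``Anonymity'' paragraph. Start from $G_0$ = the anonymity game with $b=0$: both $\alpha_0,\alpha_1$ encrypt $i_0$, the document $t_{i_0}$ is a real encoding $\Enc_{ek}(x)$, $t_{i_1}$ is uniform, and $\obf{C}$ contains $sk_\sigma$ and recovers $x$ through branch $\sigma$. The target $G_{\mathrm{end}}$ is the mirror image with $i_1$. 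The key hybrids, in order: (i) switch the \emph{ignored} ciphertext $\alpha_{1-\sigma}$ from an encryption of $i_0$ to an encryption of $i_1$ — indistinguishable by IND-CPA of the HE scheme, since the obfuscated circuit only uses $sk_\sigma$ and never touches $sk_{1-\sigma}$; the functionality of $C$ is literally unchanged on all inputs (the $1-\sigma$ ciphertext is only fed into $\SSBV$, and correctness of the commitment means honest $\pi$'s still verify, while on any other input the circuit's behavior doesn't depend on what $\alpha_{1-\sigma}$ decrypts to). (ii) Replace the uniform document $t_{i_1}$ with $\Enc_{ek'}(x)$ for a fresh PRF key $ek'$ — indistinguishable because $f_{ek'}$ is a PRF and $ek'$ appears nowhere else (the circuit holds $ek$, not $ek'$), so the one-time-pad encoding is pseudorandom. (iii) Now comes the delicate part: move the ``active'' branch of $\obf{C}$ from $\sigma$ to $1-\sigma$, i.e. switch to circuit $C'$ which holds $sk_{1-\sigma}$ and $ek'$ and recovers $x$ via $\FHDe_{sk_{1-\sigma}}(\beta'_{1-\sigma})\oplus f_{ek'}(j)$. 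This is where the SSB commitment and $\iO$ interact: $C$ and $C'$ do \emph{not} compute the same function on all inputs, only on the inputs Joe actually feeds in, so I would do an inner sequence of hybrids over the $\ell$ positions $j$. For each $j$, use index hiding (Definition~\ref{def:vcih}) to switch $\hk_0,\hk_1$ to be binding at position $j$; then by the statistically-binding property (Definition~\ref{def:vcssb}) the \emph{only} input with first coordinate $j$ that passes $\SSBV$ is the honest $(\beta^j_0,\beta^j_1,\pi^j_0,\pi^j_1,j)$, on which $C$ and $C'$ agree by HE correctness (both recover $t^j_{i}$, so both output $x^j$ once $t_{i_1}$ encodes $x$); hence $C$ and $C'$ are functionally equivalent on all inputs with that first coordinate, and one can swap the relevant sub-behavior using $\iO$. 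Padding $C$ and $C'$ to the same length (as the construction already does) is what makes the $\iO$ step applicable. After the active branch is $1-\sigma$, symmetrically: switch $\alpha_\sigma$ from encrypting $i_0$ to encrypting $i_1$ (now it is the ignored one) by IND-CPA, switch the circuit to use $\hk$'s generated normally again / undo the index-hiding steps, and finally switch $t_{i_0}$ from $\Enc_{ek}(x)$ back to uniform. Renaming $\sigma \leftrightarrow 1-\sigma$, the resulting game is exactly $G_{\mathrm{end}}$ with a relabeled $\sigma$, which has the same distribution. Summing the polynomially many negligible gaps gives the claim.

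\textbf{Main obstacle.}
The main obstacle is step (iii): arguing that replacing $C$ by $C'$ inside the obfuscation is sound even though the two circuits differ on many inputs (any maliciously crafted $\beta',\pi'$ that decommit to a different bit). The crux is that for positions $j'\neq j$ where $\hk$ is computationally binding, there might in principle exist accepting cheating proofs, so one cannot directly invoke $\iO$; the resolution is to handle one $j$ at a time, each time using index hiding to \emph{relocate} the statistically-binding slot to the position currently being edited, so that \emph{at that slot} the circuits provably agree on every input, and invoke $\iO$ only for that localized change. Getting the order of quantifiers right — interleaving the index-hiding switches, the statistical-binding argument, and the $\iO$ hops, across all $\ell$ positions and both public keys — and making sure each intermediate circuit is well-defined and correctly padded, is the technically heaviest part of the proof. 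I also need to be careful that in step (iii) the document being decoded already encodes $x$ under the relevant key (which is why steps (ii) and the symmetric undo are sequenced around it), so that $C$ and $C'$ genuinely produce the same $x^j$ at the binding slot.
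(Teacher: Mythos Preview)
Your proposal is correct and follows essentially the same route as the paper: correctness by direct inspection, and anonymity via the hybrid chain IND-CPA $\to$ PRF $\to$ (a per-position sequence interleaving index-hiding with SSB $+$ $\iO$ to slide the active branch from $\sigma$ to $1-\sigma$) $\to$ cleanup ($\iO$, index-hiding, PRF, IND-CPA). The only cosmetic deviations are the ordering of the final three cleanup steps (which commute, since after step~(iii) the circuit contains neither $sk_\sigma$ nor $ek$) and a small imprecision---at the binding slot SSB pins down the \emph{values} $\beta'_u$ (not the proofs $\pi'_u$), and the two branches recover $t^j_{i_0}$ and $t^j_{i_1}$ respectively, both of which XOR to $x^j$---but your ``main obstacle'' paragraph makes clear you have the right picture.
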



\begin{proof}
\paragraph{Correctness (Definition~\ref{def:ascorr}).} Correctness follows from 
inspection of the protocol. In particular, for each bit $j
\in[\ell]$ it holds that $$\obf{C}(\beta^j_0,\beta^j_1,\pi^j_0,
\pi^j_1,j))=C[ek,\sigma,sk_\sigma,\hk_0,\hk_1,\gamma_0,\gamma_1]
(\beta^j_0,\beta^j_1,\pi^j_0,\pi^j_1,j)$$
thanks to Definition~\ref{def:io} (Bullet 1). It is also true 
(thanks to Definition~\ref{def:fhecor}) that $\forall u\in\zo$ the 
ciphertext $\beta^j_u$ is such that $$\FHDe_{sk_u}(\beta^j_{u}) = 
\mux[t,j](\FHDe_{sk_u}(\alpha_u))=\mux[t,j](i)=t^j_i$$ Now, since 
$t^j_i=x^j\oplus f_{ek}(j)$ it follows that the output $z$ of $
\obf{C}\neq \bot$ is either $\bot$ or $x^j$. Finally, the circuit 
only outputs $\bot$ if $\exists u \in \zo$ s.t.~$\SSBV_{hk_u}
(\gamma_u,j,\beta^j_u,\pi^j_u)=0$. But since 
$$\hk_u \from 
\SSBG(1^\lambda,\ell,0), \gamma_{u}\from 
\SSBH_{\hk_u}(\beta^1_{u},\ldots,\beta^\ell_{u}), \pi^j_u \from \SSBO_{\hk_u}((\beta^1_{u},\ldots,\beta^\ell_{u}),j)$$
 then the probability that $\obf{C}$ (and therefore $\Dec$) outputs $\bot$ is $0$ thanks to Definition~\ref{def:vccor}.

\paragraph{Anonymity (Definition~\ref{def:asanon}).} We prove anonymity using a series of hybrid games. We start with a game which is equivalent to the definition when $b=0$ and we end with a game which is equivalent to the definition when $b=1$. We prove at each step that the next hybrid is indistinguishable from the previous. Therefore, at the end we conclude that the adversary cannot distinguish whether $b=0$ or $b=1$.

\paragraph{Hybrid 0.} This is the same as the definition when $b=0$. In particular, here it holds that $(\alpha_0,\alpha_1)\from (\FHEn_{pk_0}(i_0),\FHEn_{pk_1}(i_0))$.

\paragraph{Hybrid 1.} In the first hybrid we replace $\alpha_{1-\sigma}$  with $\alpha_{1-\sigma}\from \FHEn_{pk_{1-\sigma}}(i_1)$. Note that the circuit $C[ek,\sigma,sk_\sigma,\hk_0,\hk_1,\gamma_0,\gamma_1](\cdot)$ does \emph{not} contain the secret key $sk_{1-\sigma}$, therefore any adversary that can distinguish between Hybrid 0 and 1 can be turned into an adversary which breaks the IND-CPA property of the HE scheme.

\paragraph{Hybrid 2.} In the previous hybrids $t_{i_{1}}$ is a 
random string from $\zo^\ell$. In this hybrid we replace $t_{i_{1}}$
with an encryption of $x$ using a new PRF key $ek'$. That is, for 
each bit $j\in[\ell]$ we set $t^j_{i_{1}}=x^j\oplus f_{ek'}(j)$.
Clearly, any adversary that can distinguish between Hybrid 1 and 
Hybrid 2 can be used to break the PRF.

\paragraph{Hybrid 3.($\tau,\rho$).} We now define a series of $2(\ell+1)$ 
hybrids indexed by $\tau\in[0,\ell],\rho\in\zo$. In Hybrid 3.$(\tau,\rho)$ we replace 
the obfuscated circuit with the circuit
$
C'[\tau,ek,ek',\sigma,sk_0,sk_1,\hk_0,\hk_1,\gamma_0,\gamma_1](\beta'_0,\beta'_1,\pi'_0,\pi'_1,j)$  defined as
\begin{enumerate}
\item if($\exists u\in \zo: \SSBV_{hk_u}(\gamma_u,j,\beta'_u,\pi'_u)=0$) output $\bot$
\item else if($j>\tau$) output $\FHDe_{sk_\sigma}(\beta'_\sigma)\oplus f_{ek}(j)$;
\item else output $\FHDe_{sk_{1-\sigma}}(\beta'_{1-\sigma})\oplus f_{ek'}(j)$;
\end{enumerate}

We use $C'_{\tau}$ as a shorthand for a circuit defined as above which is padded to length $\max(C,C')$.

In addition, we also replace the way the keys for the vector commitment schemes are generated. Remember that in the previous hybrids 
$$
\forall u\in\zo \ \ \hk_u\from \SSBG(1^\lambda,\ell,0)
$$ which are  now replaced with 
$$
\forall u\in\zo \ \ \hk_u\from \SSBG(1^\lambda,\ell,\tau+\rho), $$ 

From inspection it is clear that the circuit obfuscated in Hybrid 3.$(0.0)$ computes the same function as the circuit obfuscated in Hybrid 2 (since $j$ is indexed starting from $1$ we always have $j>\tau$ and the branch (3) is never taken), and they are therefore indistinguishable thanks to Definition~\ref{def:io} (Bullet 3).

Next, we argue that Hybrid 3.$(\tau,0)$ is indistinguishable from Hybrid 3.$(\tau,1)$ for all $\tau\in [\ell]$. In those hybrids the obfuscated circuit is exactly the same, and the only difference is in the way the commitment keys $\hk_0,\hk_1$ are generated. In particular, the only difference is the index on which the keys are statistically binding. Therefore, any adversary who can distinguish between 3.$(\tau,0)$ and Hybrid 3.$(\tau,1)$ can be used to break the index hiding property (Definition~\ref{def:vcih}) of the vector commitment scheme.

Finally, we argue that Hybrid 3.$(\tau,1)$ is indistinguishable 
from Hybrid 3.$(\tau+1,0)$. First we note that the commitment keys 
$\hk_0,\hk_1$ are identically distributed in these two hybrids 
i.e., in both hybrids $$\forall u\in \zo \ \ \hk_u\from \SSBG(1^\lambda,\ell,\tau+1)$$

The only difference between the two hybrids is what circuits 
are being obfuscated: in Hybrid 3.$(\tau,1)$ we obfuscate $C'_{\tau}$ and in Hybrid 3.$(\tau+1,0)$ we obfuscate $C'_{\tau+1}$. We now argue that 
these two circuits give the same output on every input, and therefore an adversary that can distinguish between Hybrid 3.$(\tau,1)$ and Hybrid 3.$(\tau+1,0)$ can be used to break the indistinguishability obfuscator.

It follows from inspection that the two circuits behave differently only on inputs of the form $(\beta'_0,\beta'_1,\pi'_0,\pi'_1,\tau+1)$. On input of this form:
\begin{itemize}
\item $C'_{\tau}$ (since $j=\tau+1>\tau$) chooses branch (2) and outputs
$$x^j_0 \from \FHDe_{sk_\sigma}(\beta'_\sigma)\oplus f_{ek}(j)$$
\item $C'_{\tau+1}$ (since $j=\tau+1=\tau+1$) chooses branch (3) and outputs $$x^j_1 \from \FHDe_{sk_{1-\sigma}}(\beta'_{1-\sigma})\oplus f_{ek'}(j)$$
\end{itemize}
Now, the statistically binding property of the vector commitment scheme (Definition~\ref{def:vcssb}) allows us to conclude that there exists only one single pair $(\beta'_0,\beta'_1)$ for which $C'_{\tau}$ and $C_{\tau+1}$ do not output $\bot$ (remember that in both hybrids the commitment keys $\hk_0,\hk_1$ are statistically binding on index $\tau+1$), namely the pair
$$
\forall u \in \zo \ \ \beta^j_u = \FHEv_{pk_u}(\mux[t,\tau+1],\alpha_{u})
$$
which decrypts to the pair $(t^j_{i_0},t^j_{i_{1}})$ (since we changed $\alpha_{1-\sigma}$ in Hybrid 1), which in turns were defined as (since we changed $t^j_{i_{1}}$ in Hybrid 2)
$$
(t^j_{i_0},t^j_{i_{1}})=(x^j\oplus f_{ek}(j),x^j\oplus f_{ek'}(j))
$$
which implies that $x^j_0=x^j_1$ and therefore the two circuits have the exact same input output behavior.

This concludes the technical core of our proof, what is left now is to make few simple changes to go from Hybrid 3.$(\ell,0)$ to the same game as Definition~\ref{def:asanon} when $b=1$.

\paragraph{Hybrid 4.} In this hybrid we replace the obfuscated circuit with $$C[ek',\sigma',sk_{\sigma'},\hk_0,\hk_1,\gamma_0,\gamma_1](\cdot)$$ 
where $\sigma'=1-\sigma$. It is easy to see that the input/output behavior of this circuit is exactly the same as $C'_{\ell}$
(since $\forall j\in[\ell] : j \not> \ell$ the circuit $C'_{\ell}$ always executes branch 3) and therefore an adversary that can distinguish between Hybrid 4 and Hybrid 3.$(\ell,0)$ can be used to break the indistinguishability obfuscator. 

\paragraph{Hybrids 5, 6, 7.} In Hybrid 5 we change the distribution 
of both commitment keys $\hk_0,\hk_1$ to $\SSBG(1^\lambda,\ell,0)$ 
(whereas in Hybrid $4$ they were both sampled as 
$\SSBG(1^\lambda,\ell,\ell+1)$). Indistinguishability follows from 
the index hiding property. In Hybrids 6 we replace $t_{i_{0}}$ with a uniformly random string in $\zo^{\ell}$ (whereas in the previous hybrid it was an encryption of $x$ using the PRF $f$ with key $ek$). Since the obfuscated circuit no longer contains $ek$ we can use an adversary which distinguishes between Hybrids 5 and 6 to break the PRF. In Hybrid 7 we replace $\alpha_{1-\sigma'}$ (which in the previous hybrid is an encryption of $i_{0}$) with an encryption of $i_{1}$. Since the obfuscated circuit no longer contains $sk_{1-\sigma'}=sk_{\sigma}$ we can use an adversary which distinguishes between Hybrids 6 and 7 to break the IND-CPA property of the encryption scheme. Now Hybrid 7 is exactly as the definition of anonymity with $b=1$ with a random bit $\sigma'=1-\sigma$ (which is distributed uniformly at random) and a random encoding key $ek'$. This concludes therefore the proof.

 \end{proof}

Our theorem, together with the results of~\cite{HW15} implies the following.
\begin{corollary} Assuming the existence of homomorphic encryption  and indistinguishability obfuscators for all polynomially sized circuits, there exist an anonymous steganography scheme. 
\end{corollary}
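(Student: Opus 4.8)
The plan is to obtain the corollary as an immediate consequence of Theorem~\ref{thm:as}: the scheme $(\Gen,\Enc,\Ext,\Dec)$ of Section~\ref{sec:prot} \emph{is} the anonymous steganography scheme we want, provided its four building blocks can be instantiated from the two assumptions of the corollary (a compact homomorphic encryption scheme and an indistinguishability obfuscator for all polynomial-size circuits). So the entire argument reduces to verifying hypotheses (a)--(d) of Theorem~\ref{thm:as}, plus noting that the two properties not covered by that theorem --- compactness and confidentiality --- also hold.

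For the building blocks: hypothesis (c) (compact HE) and hypothesis (d) ($\iO$) are the assumptions themselves, so nothing is needed beyond observing that the LWE-based HE schemes cited after Definition~\ref{def:fhecompact} are compact in the required sense. For hypothesis (a), a pseudorandom function $f$ exists because IND-CPA public-key encryption yields a one-way function, and one-way functions imply PRFs through the standard HILL/GGM route. For hypothesis (b), Hub\'{a}\v{c}ek and Wichs~\cite{HW15} construct a somewhere statistically binding vector commitment scheme satisfying Definitions~\ref{def:vccor},~\ref{def:vcih} and~\ref{def:vcssb} from (fully) homomorphic encryption, which we have. With (a)--(d) in hand, Theorem~\ref{thm:as} directly gives correctness (Definition~\ref{def:ascorr}) and anonymity (Definition~\ref{def:asanon}).

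It remains to address compactness and confidentiality. Confidentiality is free: it follows from anonymity, as already noted in Section~\ref{sec:definition}. For compactness we must check $s<\ell'$, where $s=|dk|$. From the specification of $\Ext$, the decoding key bundles two FHE public keys, two FHE ciphertexts $\alpha_u$ encrypting an index in $[d]$, two SSB commitment keys $\hk_u$, and one obfuscated circuit $\obf{C}$; using compactness of the HE scheme, the $\poly(\lambda,\log\ell)$ bound on SSB decommitment proofs, and the size bound of $\iO$ (Bullet~2 of Definition~\ref{def:io}), the circuit $C$ (and thus $\obf{C}$) and every other component has size $\poly(\lambda,\log(d\ell))$, so $s=\poly(\lambda,\log(d\ell))$. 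Hence $s<\ell'$ in the regime that matters, namely when Lea's document is longer than this fixed polynomial.

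I do not expect a serious obstacle here --- the proof is essentially bookkeeping on top of Theorem~\ref{thm:as} --- but two points need care. First, compactness of the HE scheme is genuinely used (not merely correctness): without it $s$ would grow with $d$ or $\ell$ and the scheme would fail to be compact. Second, one must confirm that the circuit class for which $\iO$ is assumed to work contains the concrete circuits $C$ and $C'$ built inside $\Ext$ and the proof of Theorem~\ref{thm:as}; these are polynomial-size in $\lambda$ and $\log(d\ell)$, so assuming $\iO$ for \emph{all} polynomial-size circuits --- and, symmetrically, HE that is homomorphic for the polynomial-size circuits $\mux[t,j]$ --- is exactly what is needed.
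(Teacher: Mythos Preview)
Your proposal is correct and follows exactly the route the paper takes: the paper simply states that ``our theorem, together with the results of~\cite{HW15}'' implies the corollary, and you have spelled out precisely that derivation --- instantiating hypotheses (a)--(d) of Theorem~\ref{thm:as} from the two assumptions (with the SSB vector commitment coming from~\cite{HW15} and the PRF from one-way functions implied by public-key encryption), and then reading off correctness and anonymity. Your additional explicit treatment of compactness (the $s=\poly(\lambda,\log(d\ell))$ bound) and confidentiality is not in the paper's one-line justification of the corollary, but it matches the size discussion given earlier in Section~\ref{sec:prot} and is a welcome clarification rather than a departure.
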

\section{Lower Bound}\label{sec:lower}

In this section we show that any (correct) anonymous 
steganography scheme must have a decoding key of size bigger  
than $O(\log(\lambda))$. Since the decoding key must be sent 
over an anonymous channel, this gives a lower 
bound on the number of bits which are necessary to bootstrap 
anonymous communication.

To show this, we find a strategy for Joe that gives him a 
higher probability of guessing the leaker than if he guessed 
uniformly at random. 

Our lower bound applies to a more general class of anonymous steganography schemes than defined earlier, in particular it also applies to \emph{reactive} schemes where the leaker can post multiple documents to the website, as a function of the documents posted by other users.

We define a \emph{reactive anonymous steganography} scheme as a tuple of algorithms  $\pi=(\Enc,\Ext,\Dec)$ where:

\begin{itemize}
\item $(t_k,state_j) \from \Enc_{ek}(x,t^{k-1},state_{j-1})$ is an algorithm which takes as input a message $x\in\zo^{\ell'}$, a sequence of documents $t^{k-1}$ (which represents the set of documents previously sent) and a state of the leaker, and outputs a new document $t_k\in\zo^\ell$, together with a new state.
\item $dk\from \Ext_{ek}(t^d,state)$ is an algorithm which takes as input a transcript of all documents sent and the current state of the leaker and outputs a decryption key $dk\in \{0,1\}^s$.  
\item $x'= \Dec_{dk}(t^d)$ in an algorithm that given transcript $t^d$ returns a guess $x$ of what the secret is in a deterministic way. 
\end{itemize}

To use a reactive anonymous steganography scheme, the leaker's index $i$ is chosen uniformly at random from $\{1,\dots,n\}$ where $n$ is the number of players. For each $k$ from $1$ to $d$ we generate a document $t_k$. If $k\not\equiv i\mod{n}$ we let $t_k\from \{0,1\}^\ell$. This corresponds to the non-leakers sending a message. When $k\equiv i \mod{n}$ we define $(t_k,state_j)\from \Enc_{ek}(x,t^{k-1},state_{j-1})$, where $t^{k-1}=(t_1,\dots,t_{k-1})$. Then we define $dk\from \Ext_{ek}(t^d,state)$ and $x'= \Dec_{dk}(t^d)$. Here $dk$ is the message that Lea would send over the small anonymous channel.\footnote{Note that a ``standard'' anonymous steganography scheme is also a reactive anonymous scheme.}


The definition of $q$-correctness for reactive schemes is the same as for standard schemes, but our definition of anonymity is weaker because we do not allow the adversary to choose the documents for the honest users. This implies that our lower bound is stronger.

\begin{defi}[Correctness]
A reactive anonymous steganography scheme is \emph{$q$-correct} if for all $\lambda$ and $x\in \zo^{\ell'(\lambda)}$ we have
\[\Pr\left[\Dec_{ dk }\left( t^d\right)= x\right]\geq q. \]
where $t$ and $dk$ is chosen as above and the probability is taken over all the random coins.
\end{defi}

\begin{defi}[Weak Anonymity] Consider the following game between an adversary $A$ and a challenger $C$ 
\begin{enumerate}
\item The adversary $A$ outputs a message $x\in \zo^{\ell'}$;
\item The challenger $C$ samples random $i\in[n]$, and generates $t^d, dk$ as described above
\item The challenger $C$ outputs $t^d, dk$
\item $A$ outputs a guess $g$;
\end{enumerate}
We say that an adversary has advantage $\epsilon(\lambda)$  if 
$ \left| \Pr[ g =i] - \frac{1}{n} \right| \geq \epsilon(\lambda) $.
We say a reactive anonymous steganography scheme provides \emph{anonymity} if, for any adversary, the advantage is negligible.
\end{defi}

In the model we assume that the non-leakers' documents are chosen uniformly at random. This is realistic in the case where we use steganography, so that each $t_k$ is the result of extracting information from a larger file. We could also define a more general model where the distribution of each non-leaker's documents $t_k$ depends on the previous transcript. The proof of our impossibility results works as long as the adversary can sample from $T_k|_{T^{k-1}=t^{k-1},i\not\equiv k \mod{n}}$ in polynomial time. Using this general model, we can also model the more realistic situation where the players do not take turns in sending documents, but at each step only send a document with some small probability. To do this, we just consider ``no document'' to be a possible value of $t_k$.

We could also generalise the model to let the leaker use the anonymous channel at any time, not just after all the documents have been sent. However, in such a model, the anonymous channel transmits more information than just the number of bits send over the channel: the times at which the bits are sent can be used to transmit information~\cite{IW10}. For the number of bits sent to be a fair measure of how much information is transferred over the channel, we should only allow the leaker to use the channel when Joe knows she would use the anonymous channel\footnote{That is, there should be a polynomial time algorithm that given previous transcript $t^k$ and previous messages over the anonymous channel decides if the leaker sends a message over the anonymous channel.}, and the leaker should only be allowed to send messages from a prefix-free code (which might depend on the transcript, but should be computable in polynomial time for Joe). Our impossibility result also works for this more general model, however, to keep the notation simple, we will assume that the anonymous channel is only used at the end.

Finally, we could generalise the model by allowing access to public randomness. However, this does not help the players: as none of the players are controlled by the adversary, the players can generate trusted randomness themselves.

We let $T'=(T'_1,\dots, T'_d)$ denote that random variable where each $T'_i$ is uniformly distributed on $\zo^\ell$. In particular $T'|_{T'^k=t^k}$ is the distribution the transcript would follow if the first $k$ documents are given by $t^k$ and all the players were non-leakers. We let $dk'$ be uniformly distributed on $\zo^s$. Joe can sample from both $T'|_{T'^k=t^k}$ and $dk'$ and he can compute $\Dec$. His strategy to guess the leaker given a transcript $t$ will be to estimate \PrDx{k} for each $k\leq d$. That is, given that the transcript of the first $k$ documents is $t^k$ and all later documents is chosen as if the sender was not a leaker and the anonymous channel just sends random bits, what is the probability that the result is $x$? He can estimate this by sampling: given $t^k$ he randomly generates $t^d$ and $dk$, and then he computes $\Dec$ of this extended transcript. 
 
 Joe will now consider how each player affects these probabilities, given by \PrDx{k}. Intuitively, if these probabilities tends to be higher just after a certain player's documents than just before, he would suspect that this player was leaking. Of course, a leaking player might send some documents that lowers \PrDx{k} to confuse Joe, so we need a way to add up all the changes a players does to \PrDx{k}. The simplest idea would be to compute the additive difference $$\PrDxmath{k}-\PrDxmath{k-1}$$ and add these for each player. However, the following example shows that this strategy does not work in general. 

 \begin{example}
 Consider this protocol for two players, where one of them wants to leak one bit. We have $s=0$, that is $dk$ is the empty sting and will be omitted from the notation. 
 First we define the function $\Dec$. This function looks at the two first documents. If none of these are $0^{\ell}$, it returns the first bit of the third document. Otherwise it defines the \emph{leader} to be the first player who send $0^{\ell}$. Next $\Dec$ looks at the first time the leader sent a document different from $0^{\ell}$. If this number represents a binary number less than $\frac{9}{10}\cdot 2^{\ell}$, then $\Dec$ returns the last bit of the document before, otherwise it outputs the opposite value of that bit. If the leader only sends the document $0^{\ell}$ the output of $\Dec$ is just the last bit sent by the other player.
 
 The leaker's strategy is to become the leader. There is extremely small probability that the non-leaker sends $0^{\ell}$ in his first document, so we will ignore this case. Otherwise the leaker sends $0^{\ell}$ in her first document and becomes the leader. When sending her next document, she looks at the last document from the non-leaker. If it ended in $0$, Joe will think there is $90\%$ chance that $0$ it is output and $10\%$ chance that the output will be $1$, and if it ended in $1$ it is the other way around. If the last bit in the non-leakers document is the bit the leakers wants to leak, she just sends the document $0^{\ell-1}1$. To Joe, this will look like the non-leaker raised the probability of this outcome from $50\%$ to $90\%$ and then the leaker raised it to $100\%$. Thus, Joe will guess that the non-leaker was the leaker. 
 
 If the last bit of the previous document was the opposite of what the leaker wanted to reveal, she will ``reset'' by sending $0^{\ell}$. This brings Joe's estimate that the result will be $1$ back to $50\%$. The leaker will continue ``resetting'' until the non-leaker have sent a document ending in the correct bit more times than he has sent a document ending in the wrong bit. For sufficiently high $d$,  this will happen with high probability, and then the leaker sends $0^{\ell}1$. This ensures that $\Dec(T)$ gives the correct value and that Joe will guess that the non-leaker was the leaker.
 
 If the leaker wants to send many bits, the players can just repeat this protocol. 
 \end{example} 
 Obviously, the above protocol for revealing information is not a good protocol: it should be clear to Joe that the leader is not sending random documents.

 As the additive difference does not work, Joe will instead look at the multiplicative factor $$\frac{\PrDxmath{k}}{\PrDxmath{k-1}}.$$ 
 \begin{defi}
 For a transcript $t$ the \emph{multiplicative factor} $mf_{i,[k_0,k_1]}$ of player $j$ over the time interval $[k_0,k_1]$ is given by
 \[mf_{j,[k_0,k_1]}(t,r)=\prod_{[k_0,k_1]\cap (j+n\N)} \frac{\PrDxmath{k})}{\PrDxmath{k-1}},\]
 We also define 
  \[mf_{-i,[k_0,k_1]}(t,r)=\prod_{[k_0,k_1]\setminus (j+n\N)} \frac{\PrDxmath{k}}{\PrDxmath{k-1}},\]
 \end{defi}

 For fixed $k_0$ and non-leaking player $j$ the sequence $$mf_{j,[k_0,k_0]}(T),mf_{j,[k_0,k_0+1]}(T),\dots$$ is a martingale. Furthermore, if we consider the first $k_1-2$ documents to be fixed and player $1$ sends a document at time $k_1-1$ and player $2$ at time $k_1$, then player $1$'s document can affect the distribution of $$mf_{2,[k_0,k_1]}(T')|_{T'^{k_1-1}=t^{k_1-1}}$$ but no matter what document $t_{k_1-1}$ player $1$ sends,  $$mf_{2,[k_0,k_1]}(T')|_{T'^{k_1-1}=t^{k_1-1}}$$ will have expectation $$mf_{2,[k_0,k_1-1]}(t^{k_1-1}).$$ Similar statements holds for the additive difference, but the advantage of the multiplicative factor is that it is non-negative. This, together with the fact that it is also a martingale, implies that it does not get large with high probability.  
 
 \begin{proposition}
 For $j$ and $k_0,k_1$ we have:
 \[\E_{T'|T^{k_1-1}=t^{k_1-1}}mf_{j,[k_0,k_1]}(T)= mf_{j,[k_0,k_1-1]}(t^{k_1-1})\]
 \end{proposition}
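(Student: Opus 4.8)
The plan is to reduce the statement to the tower property of conditional expectations, after a short case analysis on whether $k_1$ falls on one of player $j$'s turns. The key preliminary observation is that for any interval $[k_0,k']$ the quantity $mf_{j,[k_0,k']}$ is measurable with respect to $T'^{k'}$ alone: every factor has the form $\PrDxmath{k}/\PrDxmath{k-1}$ with $k\le k'$, and each such factor is a function of the prefix $t^{k}$ only. (I suppress the unused second argument $r$ of $mf$, and adopt the usual convention for the degenerate case where some denominator $\PrDxmath{k-1}$ vanishes; see the last paragraph.) I will write $\E_{T'|T'^{k_1-1}=t^{k_1-1}}$ for the conditional expectation over $T'$ (and the independent uniform $dk'$) given that the first $k_1-1$ documents are $t^{k_1-1}$.

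First I would dispose of the case $k_1\not\equiv j\pmod n$ (and the trivial case $k_1<k_0$). Here $[k_0,k_1]\cap(j+n\N)=[k_0,k_1-1]\cap(j+n\N)$, so $mf_{j,[k_0,k_1]}$ and $mf_{j,[k_0,k_1-1]}$ are literally the same function, which by the observation above is determined by $t^{k_1-1}$; hence its conditional expectation equals its value, $mf_{j,[k_0,k_1-1]}(t^{k_1-1})$. In the case $k_1\equiv j\pmod n$ we have $[k_0,k_1]\cap(j+n\N)=\bigl([k_0,k_1-1]\cap(j+n\N)\bigr)\sqcup\{k_1\}$, so
\[
mf_{j,[k_0,k_1]}(t)=mf_{j,[k_0,k_1-1]}(t^{k_1-1})\cdot\frac{\PrDxmath{k_1}}{\PrDxmath{k_1-1}}.
\]
Taking the conditional expectation given $T'^{k_1-1}=t^{k_1-1}$, the factor $mf_{j,[k_0,k_1-1]}(t^{k_1-1})$ and the denominator $\PrDxmath{k_1-1}$ are constants, so it suffices to prove $\E_{T'|T'^{k_1-1}=t^{k_1-1}}\bigl[\PrDxmath{k_1}\bigr]=\PrDxmath{k_1-1}$, where on the left $t^{k_1}$ ranges over the random extension $T'^{k_1}$.

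This remaining identity is exactly the law of total expectation. Writing $p(s):=\Pr(\Dec_{dk'}(T')=x\mid T'^{|s|}=s)$ for a prefix $s$, we have $p(T'^{k_1})=\E\bigl[\mathbf 1[\Dec_{dk'}(T')=x]\mid T'^{k_1}\bigr]$ (the conditioning averages over $dk'$, which is independent of $T'$, and over the tail of $T'$); then conditioning further on $T'^{k_1-1}=t^{k_1-1}$ and using $\sigma(T'^{k_1-1})\subseteq\sigma(T'^{k_1})$ gives
\[
\E\bigl[p(T'^{k_1})\mid T'^{k_1-1}=t^{k_1-1}\bigr]=\E\bigl[\mathbf 1[\Dec_{dk'}(T')=x]\mid T'^{k_1-1}=t^{k_1-1}\bigr]=p(t^{k_1-1}),
\]
which is what we wanted; combining the two cases finishes the proof. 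I do not expect a real obstacle here — the content is one application of the tower property, and the only fiddly point is the bookkeeping of the index set $[k_0,k_1]\cap(j+n\N)$ together with the division‑by‑zero convention when $\PrDxmath{k-1}=0$. Since $\Dec$ succeeds with probability at least $q$, one can simply state the proposition for prefixes along which all the relevant probabilities are positive (or fix the ratio to $1$ whenever the denominator vanishes), and the claimed martingale identity then holds verbatim in the nondegenerate case, which is the only case used subsequently.
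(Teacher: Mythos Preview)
Your proof is correct and follows essentially the same approach as the paper: a case split on whether $k_1\equiv j\pmod n$, with the nontrivial case reduced to a single identity about conditional probabilities. The paper's proof is two lines, invoking ``Bayes' Theorem'' where you (more accurately) invoke the tower property / law of total expectation; these are the same computation, and your version is simply more explicit about the bookkeeping and the division-by-zero convention.
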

 \begin{proof}
 For $k\not\equiv j\mod{n}$ we have $mf_{j,[k_0,k_1]}(t)= mf_{j,[k_0,k_1-1]}(t^{k_1-1})$ for any $t$ so the statement is trivially true. For $k\equiv j\mod{n}$ it follows from Bayes' Theorem.
  \end{proof}
 
 \begin{proposition}
 For fixed $x$ an random $T$ there is probability at most $\frac{4d}{m_0}$ that there exists $j\neq i $ and $k_0$ such that $mf_{j,[k_0,d]}(T)$ or $mf_{-i,[k_0,d]}(T)$ is at least $\frac{m_0}{2}$.
 \end{proposition}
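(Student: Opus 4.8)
The plan is to exploit the martingale structure of the multiplicative factors together with Markov's inequality and a union bound. Fix the message $x$. For a non-leaking player $j\neq i$ and a starting time $k_0$, I would look at the process $\left(mf_{j,[k_0,k]}(T)\right)_{k=k_0-1}^{d}$, with the convention that the empty product $mf_{j,[k_0,k_0-1]}(T)$ equals $1$. This process changes only at the steps $k\equiv j\bmod n$, and at every such step the document $T_k$ is produced by a \emph{non-leaker} (since $j\neq i$), so conditioned on the past it is uniform on $\zo^\ell$, exactly as under $T'$. Hence the one-step identity established in the preceding proposition --- which is just the tower property, $\E_{T_k}\!\left[\Pr(\Dec_{dk'}(T')=x\mid T'^{k}=(t^{k-1},T_k))\right]=\Pr(\Dec_{dk'}(T')=x\mid T'^{k-1}=t^{k-1})$ --- applies and shows that the process is a non-negative martingale started at $1$. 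The same reasoning applies to $\left(mf_{-i,[k_0,k]}(T)\right)_k$, because every step occurring in that product is a non-leaker step. Iterating the one-step identity down to $k=k_0-1$ gives
\[
\E\!\left[mf_{j,[k_0,d]}(T)\right]=1 \qquad\text{and}\qquad \E\!\left[mf_{-i,[k_0,d]}(T)\right]=1 .
\]

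Next I would apply Markov's inequality --- legitimate since both variables are non-negative --- to get $\Pr\!\left[mf_{j,[k_0,d]}(T)\ge m_0/2\right]\le 2/m_0$ and $\Pr\!\left[mf_{-i,[k_0,d]}(T)\ge m_0/2\right]\le 2/m_0$, and then a union bound over all $(j,k_0)$. The observation that keeps the bound linear in $d$ is that almost all pairs $(j,k_0)$ yield the \emph{same} random variable: for fixed $j$, the value $mf_{j,[k_0,d]}(T)$ depends on $k_0$ only through the first turn of player $j$ at or after $k_0$, so as $k_0$ ranges over $[d]$ it realizes at most $d_j:=\left|[d]\cap(j+n\N)\right|$ distinct non-trivial values (once $k_0$ exceeds all of player $j$'s turns the factor is the empty product $1<m_0/2$, assuming $m_0>2$, and contributes nothing). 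Likewise $mf_{-i,[k_0,d]}(T)$ realizes at most $d-d_i$ distinct non-trivial values. Since $\sum_{j\neq i}d_j=d-d_i$, the union bound runs over at most $(d-d_i)+(d-d_i)\le 2d$ events, each of probability at most $2/m_0$, giving the claimed bound $4d/m_0$.

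The step I expect to be the main obstacle is the first one: justifying the martingale property for the \emph{actual} transcript $T$ (which contains a leaker) rather than for the all-honest transcript $T'$ used to \emph{define} the factors $mf$. The point that needs to be spelled out is that a player $j\neq i$ picks up a nontrivial factor only at steps where a non-leaker speaks, so precisely at those steps the conditional law of the new document matches the one built into the $T'$-probabilities, which is what licenses the Bayes/tower-property computation; at the leaker's own steps $k\equiv i$ the quantity $mf_{j,[k_0,\cdot]}$ is unchanged, so nothing has to be argued there. The only other thing to be careful about is the bookkeeping that collapses the $d$ choices of $k_0$ to $d_j$ genuinely distinct variables per player (and $d-d_i$ for the $-i$ factor); without this observation the union bound would be quadratic in $d$. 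Both points are elementary, so I expect the write-up to be short.
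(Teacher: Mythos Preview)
Your proposal is correct and follows essentially the same route as the paper: expectation $1$ for each $mf_{j,[k_0,d]}(T)$ and $mf_{-i,[k_0,d]}(T)$ via the martingale identity, Markov's inequality to get the $2/m_0$ bound per event, and then the collapsing of the apparently quadratic family $(j,k_0)$ to at most $d$ distinct non-trivial values (plus at most $d$ more for the $-i$ factors) before the union bound. Your discussion of why the martingale identity transfers from $T'$ to the actual transcript $T$ at non-leaker steps is in fact more explicit than the paper's own write-up, which simply invokes the preceding proposition.
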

 \begin{proof}
 For fixed $k_0$, and non-leaker $j$ we have $\E\left( mf_{j,[k_0,d]}(T)\right) =1$. As $$mf_{j,[k_0,d]}(t)\geq 0$$ this implies that $$\Pr(mf_{j,[k_0,d]}(T)\geq \frac{m_0}{2}|T)\leq \frac{2}{m_0}$$ Similarly for $mf_{-i,[k_0,d]}$. We have $$mf_{j,[k_0,d]}(t)=mf_{j,[k_0-1,d]}(t)$$ if player $j$ does not send the $k_0$'th document, so for fixed $t$ there are only $d$ different values (not counting $1$) of $mf_{j,[k_0,d]}(t)$ with $j\neq i$ and $k_0\leq d$. By the union bound, the probability that one of the $mf_{j,[k_0,d]}(t)$'s or one of the $mf_{-i,[k_0,d]}(t)$'s are above $\frac{m_0}{2}$ is at most $\frac{4d}{m_0}$.
  \end{proof}
  
   By sampling $T'^d|_{T'^k=t^k}$ and $dk'$ Joe can estimate \PrDx{k} with a small \emph{additive} error, but when the probability is small, there might still be a large \emph{multiplicative} error. In particular, Joe can only do polynomially many samples, so when \PrDx{k} is less than polynomially small Joe will most likely estimate it to be $0$. This is the reason that anonymous steganography with small anonymous channel works at all: we keep \PrDx{k} exponentially small until we use the anonymous channel. Instead, the idea is to estimate the multiplicative factor starting from some time $k_0$ such that  \PrDx{k} is not too small for any $k\geq k_0$. The following proposition is useful when choosing $k_0$ and choosing how many samples we make.

   \begin{proposition}
 Assume that Joe samples $\frac{3\cdot 2^{s+9} d^4}{\epsilon^2}\log\left(\frac{4d}{\epsilon}\right)$ times to estimate \PrDx{k}.
 
 If $\PrDxmath{k}\geq \frac{\epsilon^2}{2^{s+7} d^2}$, there is probability at least $1-\frac{\epsilon}{2d}$ that his estimate will be in the interval 
 \[[(1-\frac{1}{2d})\PrDxmath{k},(1+\frac{1}{2d})\PrDxmath{k}]\]
 \end{proposition}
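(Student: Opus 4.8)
The plan is to recognize Joe's estimator as the empirical mean of $N := \frac{3\cdot 2^{s+9} d^4}{\epsilon^2}\log\left(\frac{4d}{\epsilon}\right)$ independent Bernoulli trials, each with success probability $p := \PrDxmath{k}$, and then invoke a multiplicative Chernoff bound. Concretely, write $X = \sum_{a=1}^{N} X_a$ for the number of samples on which $\Dec$ (applied to the sampled extension of $t^k$ together with a fresh uniform $dk'$) returns $x$; then $\E[X] = Np$ and Joe's estimate is $X/N$. The event that his estimate falls outside $[(1-\tfrac{1}{2d})p,(1+\tfrac{1}{2d})p]$ is precisely the event $|X - Np| > \tfrac{1}{2d}Np$.

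By the standard two-sided multiplicative Chernoff bound, for any $\delta\in(0,1)$ we have $\Pr\left[|X-Np|>\delta Np\right]\leq 2\exp\left(-\delta^2 Np/3\right)$; here I apply it with $\delta = \tfrac{1}{2d}\in(0,1)$. It therefore suffices to show $\tfrac{Np}{12d^2}\geq \log\left(\tfrac{4d}{\epsilon}\right)$, because then the right-hand side is at most $2\exp\left(-\log(4d/\epsilon)\right) = \tfrac{\epsilon}{2d}$, as desired.

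To check this inequality, I would just substitute the hypothesis $p\geq \frac{\epsilon^2}{2^{s+7}d^2}$ and the chosen value of $N$: this gives $Np \geq \frac{3\cdot 2^{s+9} d^4}{\epsilon^2}\log\left(\tfrac{4d}{\epsilon}\right)\cdot\frac{\epsilon^2}{2^{s+7}d^2} = 12\, d^2 \log\left(\tfrac{4d}{\epsilon}\right)$, which is exactly the bound needed. The constants were evidently reverse-engineered for this: the factor $2^{s+9}$ in $N$ against $2^{s+7}$ in the lower bound on $p$ leaves $2^2=4$, the $d^4$ in $N$ against $d^2$ in the lower bound on $p$ leaves $d^2$, and the leading $3$ together with $\tfrac{1}{\delta^2}=4d^2$ produces the required $12 d^2$.

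The only point that needs a word of justification rather than pure calculation is the legitimacy of the i.i.d.\ Bernoulli modelling, namely that Joe can indeed draw independent samples from $T'^d|_{T'^k=t^k}$ (and independent uniform keys $dk'$) and evaluate the deterministic $\Dec$ on each — but this was already established in the discussion preceding the proposition, where it is noted that Joe can sample from both $T'|_{T'^k=t^k}$ and $dk'$ in polynomial time. Given that, I do not expect any genuine obstacle: the proof is essentially a single invocation of the Chernoff bound plus the arithmetic verification above that the prescribed sample count is large enough.
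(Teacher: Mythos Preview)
Your proof is correct and is precisely the approach the paper intends: the paper's own proof consists of the single sentence ``Follows from the multiplicative Chernoff bound,'' and you have merely spelled out that invocation and verified the arithmetic. Nothing is missing or done differently.
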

 \begin{proof}
 Follows from the multiplicative Chernoff bound.
  \end{proof}
  
 \begin{defi}
 In the following we say that Joe's estimate of \PrDx{k} is \emph{bad} if $\PrDxmath{k}\geq \frac{\epsilon^2}{2^{s+7} d^2}$ but his estimate is not in the interval
  \[[(1-\frac{1}{2d})\PrDxmath{k},(1+\frac{1}{2d})\PrDxmath{k}].\]
 \end{defi}
 
Now we are ready to prove the impossibility result.
 
 \begin{theorem}\label{theo:impos}
 Let be $\epsilon$ a function in $\lambda$ such that $\frac{1}{\epsilon}$ is bounded by a polynomial, and let $\pi$ be a reactive anonymous steganography scheme with $s(\lambda)=O(\log(\lambda))$, $\ell'\geq s+7+ 2\log_2(d)-2\log_2(\epsilon)$ that succeeds with probability at least $q(\lambda)$.
Now there is a probabilistic polynomial time Turing machine $A$ that takes input $t$ and $x$ and outputs the leaker identity with probability 
 \[q(\lambda)+\frac{1-q(\lambda)}{n(\lambda)}-\epsilon(\lambda)\]
 \end{theorem}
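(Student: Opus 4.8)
The plan is to exhibit a polynomial-time Joe who, given $t$ and $x$, estimates the posterior decoding probabilities $p_k:=\Pr[\Dec_{dk'}(T')=x \mid T'\text{ has prefix }t^k]$ for $k=0,\dots,d$ — the randomness being a uniform $dk'\from\zo^s$ together with the honest (uniform) continuation of $t^k$ — and then blames whichever player pushed the sequence $p_0,\dots,p_d$ upward. Using a uniform $dk'$ instead of the real key only costs a factor $2^{-s}=1/\poly$, since $s=O(\log\lambda)$, so $p_k$ is meaningful, and Joe can estimate $p_k$ by Monte Carlo: extend $t^k$ honestly, draw a uniform $dk'$, run $\Dec$, and repeat $N=\frac{3\cdot 2^{s+9}d^4}{\epsilon^2}\log(4d/\epsilon)=\poly(\lambda)$ times. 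By the Chernoff proposition and a union bound over $k$, except with probability $\le\epsilon/6$ none of his estimates is \emph{bad}: every $p_k$ exceeding the threshold $\theta:=\frac{\epsilon^2}{2^{s+7}d^2}$ is reproduced within a factor $1\pm\frac1{2d}$.

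Two structural facts drive the argument. First, by $q$-correctness the run is \emph{good} (i.e.\ $\Dec_{dk}(t^d)=x$ for the real key) with probability $\ge q$, and then $p_d\ge 2^{-s}$, since that single value of $dk'$ already decodes. Second, the hypothesis $\ell'\ge s+7+2\log_2 d-2\log_2\epsilon$ is exactly $2^{-\ell'}\le\theta$, and since $\sum_{x'}\Pr[\Dec_{dk'}(T')=x']=1$ for each fixed $dk'$, the minimal value of $p_0$ over messages is at most $2^{-\ell'}\le\theta$; so the attacker, who chooses $x$ in the weak-anonymity game, may (and we do) take $x$ with $p_0\le\theta$ — such messages are abundant and can be found efficiently. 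Thus on a good run $p_k$ climbs from $\le\theta$ to $\ge 2^{-s}\gg\theta$. Now fix $m_0=\Theta(d/\epsilon)$ with $4d/m_0\le\epsilon/6$. By the ``$mf$ does not get large'' proposition, with probability $\ge1-\epsilon/6$ every non-leaker $j$ satisfies $mf_{j,[k_1,d]}(T)<m_0/2$ and $mf_{-i,[k_1,d]}(T)<m_0/2$ for \emph{all} starting points $k_1$. Condition on a good run with no bad estimate and these bounds holding; including also the single-step Markov bound used below, a union bound makes this event have probability $\ge q-\epsilon$.

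Let $k_0\in[1,d]$ be the smallest index with $p_k\ge\theta$ for every $k\in[k_0,d]$ (well defined, since $p_d\ge 2^{-s}$); then $p_{k_0-1}<\theta$, by minimality if $k_0\ge2$ and by the choice of $x$ if $k_0=1$. Joe locates $k_0$ approximately from his estimates. For every $k\in[k_0+1,d]$ both $p_k,p_{k-1}\ge\theta$, so Joe's estimate of each factor $mf_{j,[k_0+1,d]}(t)$ is correct up to a constant. Telescoping the ratios $p_k/p_{k-1}$, $k_0\le k\le d$, over all players gives $mf_{i,[k_0,d]}(t)\cdot mf_{-i,[k_0,d]}(t)=p_d/p_{k_0-1}>2^{-s}/\theta$, so, using $mf_{-i,[k_0,d]}(t)<m_0/2$, the leaker's own factor obeys $mf_{i,[k_0,d]}(t)>\frac{2^{-s}/\theta}{m_0/2}>m_0$. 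If $k_0\not\equiv i\pmod n$ then $mf_{i,[k_0+1,d]}(t)=mf_{i,[k_0,d]}(t)>m_0$, which Joe can compute up to a constant and which dominates every non-leaker's estimated $mf_{j,[k_0+1,d]}(t)<m_0/2$, so Joe outputs $\arg\max_j\widehat{mf}_{j,[k_0+1,d]}(t)=i$. If $k_0\equiv i\pmod n$, write $mf_{i,[k_0,d]}(t)=\frac{p_{k_0}}{p_{k_0-1}}\cdot mf_{i,[k_0+1,d]}(t)$: when the visible factor $mf_{i,[k_0+1,d]}(t)$ is large the previous argument applies; otherwise the leaker's contribution hides in the single step $p_{k_0}/p_{k_0-1}$, which crosses $\theta$ and is invisible to Joe's ratio estimates. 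To attribute that step one uses the martingale proposition — at a non-leaker turn $k$ the factor $p_k/p_{k-1}$ is a non-negative, mean-$1$ increment, so by Markov and a union bound no non-leaker turn raises the posterior by a large factor except with small probability — whence the step at which Joe sees the posterior leap out of the sub-$\theta$ regime is a leaker turn, and Joe outputs $k_0\bmod n=i$.

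Putting it together: Joe runs a confidence check (the estimated $p_d$ is $\ge 2^{-s}/2$ and the identification above produces a clear winner); on the good-and-clean event (probability $\ge q-\epsilon$) the check passes and Joe names the leaker, and on the complement he outputs a uniformly random index, which equals $i$ with probability $\tfrac1n$, so $\Pr[A(t,x)=i]\ge(q-\epsilon)(1-\tfrac1n)+\tfrac1n\ge q+\tfrac{1-q}{n}-\epsilon$. The main obstacle is precisely the last case and the bookkeeping it forces: Joe's estimates only resolve the part of the posterior's climb above $\theta$, so one must catch separately the single step that crosses $\theta$ and argue, via the single-step Markov bound, that a sizable such crossing can only be the leaker's — and one must fit all the thresholds together (the estimable-probability threshold $\theta$, the $mf$-tail bound $m_0$, the single-step jump threshold, and the constant estimation blow-up), which is where $s=O(\log\lambda)$ (keeping $\theta$ and $N$ polynomial) and $\ell'\ge s+7+2\log_2 d-2\log_2\epsilon$ (letting $p_0$ be taken below $\theta$) are used. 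Covering the intermediate regime — where the above-$\theta$ climb is only moderate and no single sub-$\theta$ step is a detectable jump — is the delicate point and presumably needs a finer, multi-scale version of this blaming scheme rather than the crude two-way split sketched above.
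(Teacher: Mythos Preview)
Your framework—estimate $p_k=\Pr[\Dec_{dk'}(T')=x\mid T'^k=t^k]$ by sampling, bound the non-leakers' multiplicative factors by the martingale/Markov proposition, and blame whoever has a large factor—is exactly the paper's. But you are missing the one idea that closes the ``intermediate regime'' you flag at the end, and without it the proof is genuinely incomplete.

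The missing observation is \emph{discreteness}: once all $d$ documents are fixed, the only remaining randomness in $p_d$ is the uniform $dk'\in\zo^s$, so $p_d$ is an integer multiple of $2^{-s}$. Hence either $p_d=0$ or $p_d\ge 2^{-s}$; there is no in-between. The paper exploits this by arguing the \emph{contrapositive} rather than trying, as you do, to prove directly that $mf_i$ is large. Concretely: define $mf_j$ for the player who sends the crossing message as $mf_{j,[k_0+1,d]}\cdot\frac{p_{k_0}}{\theta'}$ with $\theta'=(1-\tfrac1{2d})^{-1}\theta$ (and $mf_j=mf_{j,[k_0+1,d]}$ for the others, $mf_{-i}$ analogously). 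Joe \emph{can} estimate this quantity, since both $p_{k_0}$ and $\theta'$ are known, and since $p_{k_0-1}\le\theta'$ whenever the estimate at $k_0-1$ is not bad, one has $mf_j\le mf_{j,[k_0,d]}$, so the non-leaker bound $mf_j\le m_0/2$ carries over. Joe's rule is then: if exactly one $\widehat{mf}_j>m_0$, output that $j$; otherwise output a uniform guess. Now suppose $E=0$ but Joe's rule does not single out $i$, i.e.\ $\widehat{mf}_i\le m_0$, hence $mf_i\le 2m_0$. By construction $\prod_j mf_j=p_d/\theta'$, so
\[
p_d \;=\; \theta'\cdot mf_i\cdot mf_{-i}\;<\;\theta\cdot 2m_0\cdot\tfrac{m_0}{2}\;=\;\theta\, m_0^2\;=\;\frac{\epsilon^2}{2^{s+7}d^2}\cdot\frac{64d^2}{\epsilon^2}\;<\;2^{-s},
\]
and by discreteness $p_d=0$, so the run is \emph{not} good. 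Contrapositive: whenever the run is good and $E=0$, Joe's rule outputs $i$. No case split on whether the crossing step is the leaker's, no single-step Markov side argument, and no multi-scale refinement is needed. Your attempt to lower-bound $mf_{i,[k_0,d]}$ directly and then separately attribute the $\theta$-crossing step is exactly where the argument gets stuck; the paper avoids it entirely by (i) folding the crossing step into $mf$ via the computable proxy $\theta'$ in the denominator, and (ii) using the $2^{-s}$-granularity of $p_d$ to turn ``$mf_i$ not large'' into ``decoding fails''.

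Two minor points. First, you choose $x$ with $p_0\le\theta$ by averaging; the paper instead reduces (via an XOR with public randomness) to the case where $\Dec_{dk'}(T')$ is uniform, so $p_0=2^{-\ell'}\le\theta$ for every $x$ and one may simply take $x=0^{\ell'}$. Second, in your accounting you should let Joe fall back to a uniform guess whenever his rule does not isolate a unique player (not only when the ``confidence check'' fails); then the probability bound $q+\frac{1-q}{n}-\epsilon$ drops out exactly as you wrote, since on $\{\text{good}\}\cap\{E=0\}$ Joe is always right and elsewhere he is right with probability at least $1/n$.
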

 

 \begin{proof}
 Let $\pi$ be a reactive anonymous steganography scheme.  We assume that for random $T'$ and $dk'$ the random variable $\Dec_{dk'}(T')$ is uniformly distributed\footnote{If this is not the case, we can define a reactive anonymous scheme $\widetilde{\pi}$ where this is the case: just let $X'$ be uniformly distributed on $\zo^{\ell'}$, let $\widetilde{\Enc}(x,t^{k},state)=\Enc(x\oplus X',t^k,state)$ and $\widetilde{\Dec_{dk}}(t)=X'\oplus \Dec_{dk}(t)$, where $\oplus$ is bitwise addition modulo $2$. To use $\widetilde{\pi}$ we would need $\ell'$ bits of public randomness to give us $X'$. To get this, we can just increase $\ell$ by $\ell'$ and let $X'$ be the last $\ell'$ bits of the first document.} on $\zo^{\ell'}$ and we will just let Joe send $0^{\ell'}$ in the anonymity game. 
 
Let $m_0=\frac{8d}{\epsilon}$. Consider a random transcript $t$. If for some $k_0$ and some non-leaker $j$ we have $mf_{j,[k_0,d]}\geq \frac{m_0}{2}$ or $mf_{-i,[k_0,d]}\geq \frac{m_0}{2}$ we set $E=1$. 
 
 First Joe will estimate \PrDz{k} for all $k$ using $$\frac{3\cdot 2^{s+9} d^4}{\epsilon^2}\log\left(\frac{4d}{\epsilon}\right)$$ samples for each $k$. Set $E=1$ if at least one of these estimates is bad. In all other cases, $E=0$. By the above propositions and the union bound, $\Pr(E=1)\leq \epsilon(\lambda)$.
 
 Now let $k_0$ be the smallest number such that for all $k\geq k_0$ Joe's estimate of \PrDz{k} is at least $\frac{\epsilon^2}{2^{s+7}d^2}$ . The idea would be to estimate the multiplication factors $mf_{j,[k_0+1,d]}$, but the problem is that \PrDz{k_0} could be large (even $1$) even though \PrDz{k_0-1} is small, so the players might not reveal any information after the $k_0-1$'th document. Thus, Joe needs to include the $k_0-1$'th document in his estimate of the multiplication factors, but his estimate of \PrDz{k_0-1} might be off by a large constant factor. To solve this problem, we define
 \[mf_j=
 \begin{cases}
 mf_{j,[k_0+1,d]}&\text{if }j\not\equiv k_0-1\mod{n} \\
 mf_{j,[k_0+1,d]}\frac{\PrDzmath{k_0}}{(1-\frac{1}{2d})^{-1}\frac{\epsilon^2}{2^{s+7} d^2}}&\text{if }j\equiv k_0-1\mod{n} 
 \end{cases}
 \]
 that is, we pretend that $\PrDzmath{k_0}=(1-\frac{1}{2d})^{-1}\frac{\epsilon^2}{2^{s+7} d^2}$ and then use $mf_{j,[k_0,d]}$. We define $mf_{-i}$ the similar way. Joe's estimate of $\Pr(\Dec(T)=X|T^{k_0-1}=t^{k_0-1})$ less that $\frac{\epsilon^2}{2^{s+7} d^2}$, otherwise $k_0$ would have been lower (here we are using the assumption $h\geq s+7+ 2\log_2(d)-2\log_2(\epsilon)$. Without this, $k_0$ could be $1$). Thus, if this estimate it not bad we must have $$\PrDzmath{k_0-1}\leq(1-\frac{1}{2d})^{-1}\frac{\epsilon^2}{2^{s+7} d^2}$$ So if $E=0$ then $mf_j\leq mf_{j,[k_0,d]}\leq \frac{m_0}{2}$. Similar for $mf_{-i}$.
 
If $E=0$ then $mf_j\leq \frac{m_0}{2}$ for all $j\neq i$ and $mf_{-i}\leq \frac{m_0}{2}$. Furthermore, as all Joe's estimate are good, his estimate of $mf_j$ is off by at most a factor $\left(1-\frac{1}{2d}\right)^{-d}< 2$. Now we define Joe's guess: if exactly one of his estimated $mf_j$'s are above $m_0$ he guesses that this player $j$ is the leaker. Otherwise he chooses his guess uniformly at random from all the players. 
There are two ways \PrDz{k} can increase as $k$ increases\footnote{If we allow the leaker to send anonymous bits before the end of the open communication, this is a third way \PrDz{k} can increase. However, if the times where the anonymous channel is used are predictable by Joe, he can still sample as if the anonymous bits where random. This way, each anonymous bits makes \PrDz{k} increase by at most a factor $2$. If the leaker can only send $s$ anonymous bit in total this only moves a factor $2$ increase in \PrDz{k} from a later point in the proof to here.}: by the leaker sending documents or by a non-leaker sending documents. In the cases where $E=0$ and Joe's estimate of $mf_i$ is less than $m_0$ we know that the contribution from the leaker's documents is a factor less than $2m_0$. As $E=0$ we also know that the total contribution from all the non-leakers is at most a factor $\frac{m_0}{2}$. 
So when only $dk'$ has not been revealed to Joe we have $$\Pr(\Dec_{dk'}(T)=X|T=t)< \frac{\epsilon^2}{2^{s+7} d^2} 2m_0\frac{m_0}{2}=\frac{\epsilon^2}{2^{s+6} d^2} m_0^2=2^{-s}$$ 
As the only randomness left to be revealed\footnote{Here we are using that $\Dec$ is deterministic. However, allowing it to be non-deterministic does not help: we could just increase $\ell$ and let $\Dec$ use the extra bits in each document as randomness instead of using a random tape.} is $dk'$ which is uniformly distributed on a set of size $2^{-s}$, we know that $$\Pr(\Dec_{dk'}(T)=0^{\ell'}|T=t)$$ is a multiple of $2^{-s}$. This implies $$\Pr(\Dec_{dk'}(T)=0^{\ell'}|T=t)=0$$ 
In other words, if $\Dec_{dk}(T)=0$ and $E=0$ then $A$ must output $i$. 
Furthermore, in all other cases where $E=0$ Joe will either guess the leaker correctly (because Joe's estimate of $mf_i$ is sufficiently high) or guess uniformly among all the players. The probability that Joe is correct is now
 
 \[\Pr(g=i)\geq q+\frac{1-q}{n}-\Pr(E=1)\geq q+\frac{1-q}{n}-\epsilon.\] 
  \end{proof}

 Notice that we cannot do better than $q+\frac{1-q}{n}$. The players could use a protocol where with probability $q$ the leaker reveals herself and the information and otherwise no-one reveals any information. This protocol succeeds with probability $q$, and when is does, Joe will guess the leaker. With probability $1-q$ it does not succeed, and Joe has probability $\frac{1}{n}$ of guessing the leaker. In total Joe will guess the leaker with probability $q+\frac{1-q}{n}$. Finally we can conclude that:


  \begin{corollary}\label{coro:thecoro}
  If $\pi$ is a reactive anonymous steganography scheme with $s=O(\log(\lambda))$, $d$ polynomial in $\lambda$ and $\frac{\ell'}{\log(\lambda)}\to \infty$ that ensures weak anonymity, then the probability of correctness $q$ tends to $0$ as $\lambda\to\infty$. 
  \end{corollary}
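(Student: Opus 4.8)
The plan is to obtain this corollary as a contrapositive of Theorem~\ref{theo:impos}: a scheme whose correctness $q$ stays bounded away from $0$ lets Joe beat the uniform guess by a constant margin, which is incompatible with weak anonymity.

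First I would negate the conclusion: assume $q(\lambda)$ does not tend to $0$, so there is a constant $c>0$ with $q(\lambda)\ge c$ for infinitely many $\lambda$. (We may assume $n(\lambda)\ge 2$, since for $n=1$ there is a single player and hence no anonymity requirement.) Next I would instantiate Theorem~\ref{theo:impos} with the \emph{constant} function $\epsilon:=c/4$, so that $1/\epsilon=4/c$ is trivially bounded by a polynomial. The one remaining hypothesis of that theorem, $\ell'\ge s+7+2\log_2(d)-2\log_2(\epsilon)$, holds for all large $\lambda$: since $s=O(\log\lambda)$, $d$ is polynomial in $\lambda$ (so $2\log_2(d)=O(\log\lambda)$), and $-2\log_2(\epsilon)=2\log_2(4/c)$ is a constant, the right-hand side is $O(\log\lambda)$, whereas $\ell'/\log(\lambda)\to\infty$ by assumption. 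Hence for every sufficiently large $\lambda$ in our infinite set, Theorem~\ref{theo:impos} applies and yields a PPT machine $A$ which on input $(t,x)$ names the leaker with probability at least $q+\frac{1-q}{n}-\epsilon$.

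Then I would turn $A$ into a weak-anonymity adversary $\Adv$: on challenge it outputs the message $x$ required by Theorem~\ref{theo:impos} (say $x=0^{\ell'}$), receives $(t^d,dk)$ from the challenger, ignores $dk$, runs $A(t^d,x)$, and forwards its guess $g$. Then $\Pr[g=i]\ge q+\frac{1-q}{n}-\epsilon$, so, using $n\ge 2$, $q\ge c$ and $\epsilon=c/4$,
\[
\Big|\Pr[g=i]-\tfrac1n\Big|\ \ge\ q+\frac{1-q}{n}-\epsilon-\frac1n\ =\ q\cdot\frac{n-1}{n}-\epsilon\ \ge\ \frac{q}{2}-\epsilon\ \ge\ \frac{c}{2}-\frac{c}{4}\ =\ \frac{c}{4}.
\]
Since this holds for infinitely many $\lambda$, the advantage of the fixed adversary $\Adv$ is at least the constant $c/4$ infinitely often and is therefore not negligible, contradicting weak anonymity. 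Hence $q(\lambda)\to 0$.

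I do not expect any real obstacle beyond the parameter bookkeeping in the second step: the subtle point is to choose $\epsilon$ as a constant tied to $\limsup_\lambda q(\lambda)$ (rather than a vanishing function) and to check that the slack $\ell'=\omega(\log\lambda)$ dominates $s+7+2\log_2(d)$ together with the constant $-2\log_2(\epsilon)$ — which is exactly the role of the hypothesis $\ell'/\log(\lambda)\to\infty$. All of the substance is already in Theorem~\ref{theo:impos}, which we are free to invoke.
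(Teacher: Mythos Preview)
Your argument is correct and follows the same route as the paper: both proofs simply invoke Theorem~\ref{theo:impos} with a suitable $\epsilon$ and read off that the resulting adversary's advantage $q\cdot\frac{n-1}{n}-\epsilon$ is non-negligible unless $q\to 0$. The only cosmetic difference is that the paper argues directly with the vanishing choice $\epsilon=\max(\lambda^{-1},2^{(s+7+2\log_2 d-\ell')/2})$, whereas you take the contrapositive and fix $\epsilon=c/4$ constant; either way the hypothesis $\ell'=\omega(\log\lambda)$ is exactly what makes the inequality $\ell'\ge s+7+2\log_2 d-2\log_2\epsilon$ go through.
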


  \begin{proof}
  Let $\pi$ be as in the assumption and define $$\epsilon=\max(\lambda^{-1},2^{-\frac{s+7+2\log_2(d)-\ell'}{2}})$$ By assumption, $s=O(\log(\lambda))$, $\log(d)=O(\log(\lambda))$, and $\frac{\ell'}{\log(\lambda)}\to\infty$, so $\epsilon\to 0$. The parameters satisfy the assumptions in Theorem \ref{theo:impos} so there is an adversary that can guess the leaker with probability 
  \[q+\frac{1-q}{n}-\epsilon=\frac{1}{n}+\frac{n-1}{n}q-\epsilon\geq \frac{1}{2}+\frac{q-2\epsilon}{2}.\]
  As $\pi$ ensures anonymity, $\frac{q-2\epsilon}{2}$ must be negligible and as $\epsilon\to 0$ we must have $q\to 0$. 
   \end{proof}


\bibliographystyle{alpha}

\bibliography{anonsteg} 
\appendix
\newpage

	 \end{document}